\documentclass[lettersize,journal]{IEEEtran}

\usepackage{tikz}
\usepackage{amsmath}
\usepackage{ulem}
\usepackage{graphicx}
\usepackage{epstopdf}
\usepackage{xspace}
\usepackage{pifont}
\usepackage{multirow}
\usepackage{adjustbox}
\usepackage{booktabs} 
\usepackage{comment}
\usepackage{float}
\usepackage{subfigure}
\usepackage{url} 
\usepackage{algorithm}
\usepackage{algorithmic}
\usepackage{caption}
\usepackage{color}
\usepackage{listings}
\usepackage{xcolor}
\usepackage{booktabs}
\usepackage{makecell}
\usepackage{multirow}
\newcommand{\bheading}[1]{{\vspace{4pt}\noindent{\textbf{#1}}}}
\newcommand{\iheading}[1]{{\vspace{2pt}\noindent{\textit{#1}}}}

\newcounter{note}[section]

\newcommand{\secref}[1]{\mbox{Sec.~\ref{#1}}\xspace}

\newcommand{\figref}[1]{\mbox{Fig.~\ref{#1}}}

\newcommand{\ignore}[1]{}

\newcommand{\ie}{\textit{i.e.}\xspace}
\newcommand{\eg}{\textit{e.g.}\xspace}

\newcommand{\etal}{\textit{et al.}\xspace}

\newcommand{\sysname}{\textsc{Odyssey}\xspace}
\newcommand{\attack}{\text{execution-based}\xspace}
\newcommand{\eiatt}{\text{execution-inference}\xspace}
\newcommand{\eratt}{\text{execution-replay}\xspace}

\newcommand{\EIatt}{\text{Execution-Inference}\xspace}
\newcommand{\ERatt}{\text{Execution-Replay}\xspace}

\newcommand{\blackding}[1]{\ding{\numexpr181+#1\relax}}

\newcounter{packednmbr}

\newenvironment{packeditemize}{
\begin{list}{$\bullet$}{
\setlength{\labelwidth}{0pt}
\setlength{\itemsep}{2pt}
\setlength{\leftmargin}{\labelwidth}
\addtolength{\leftmargin}{\labelsep}
\setlength{\parindent}{0pt}
\setlength{\listparindent}{\parindent}
\setlength{\parsep}{1pt}
\setlength{\topsep}{1pt}}}{\end{list}}

\newtheorem{theorem}{Theorem}
\newtheorem{lemma}{Lemma}

\newtheorem{definition}{Definition}

\newtheorem{ob}{Observation}

\def\BibTeX{{\rm B\kern-.05em{\sc i\kern-.025em b}\kern-.08em
    T\kern-.1667em\lower.7ex\hbox{E}\kern-.125emX}}
\usepackage{balance}
\begin{document}

\def\thetitle{\sysname: Reestablishing Confidentiality in
Confidential Blockchain via Delegated Execution}
\title{\thetitle}

\author{
Ju Yang, 
Weili Wang,  
Jianyu Niu,~\IEEEmembership{Member, ~IEEE}, 
Jianzong Wang, 
Yinqian Zhang~\IEEEmembership{Member, ~IEEE}

\IEEEcompsocitemizethanks{
        \IEEEcompsocthanksitem Ju Yang, Weili Wang, Jianyu Niu, and Yinqian Zhang are with the Research Institute of Trustworthy Autonomous Systems and the Department of Computer Science and Engineering, Southern University of Science and Technology, Shenzhen, China.
        Email: 12431264@mail.sustech.edu.cn, 12032870@mail.sustech.edu.cn, niujy@sustech.edu.cn, and yinqianz@acm.org.

        \IEEEcompsocthanksitem Jianzong Wang is with Ping An Technology (Shenzhen) Co., Ltd., Shenzhen, China. Email: jzwang@188.com. 
    }
    
}

\maketitle

\begin{abstract}
Confidential blockchains leveraging Trusted Execution Environments (TEEs) have garnered extensive attention for transaction confidentiality.
In this paper, we first taxonomize two classes of attacks against confidential blockchains, \ie, \eiatt and \eratt attacks, which exploit TEEs' long-lasting side-channel and state-continuity issues to compromise the confidentiality of existing consortium blockchains. 
Then, we present \sysname, a confidential blockchain that efficiently mitigates these attacks.
The core innovations of \sysname are the following: (1) Its delegation model: clients delegate transaction execution to their designated trustees, while other participants synchronize only the execution results, which significantly reduces the attack surface while preserving confidentiality and system performance. 
(2) Two novel techniques to improve \sysname's efficiency and security: \textit{location-aware concurrent execution} and \textit{delegation failure handler}.   
Finally, we develop a prototype of \sysname on FISCO BCOS, an enterprise-grade consortium blockchain platform. 
We have conducted various experiments, and our evaluation results show that in a WAN environment with 3 nodes, \sysname can achieve about 4k throughput while keeping latency as low as 0.4-0.5s. 
\end{abstract}

\begin{IEEEkeywords}
Trusted Execution Environment, confidential consortium blockchain, side-channel attack, rollback attacks.
\end{IEEEkeywords}

\section{Introduction}
\label{sec:intro}
\IEEEPARstart{I}{n} 2008, Nakamoto invented blockchain technology, which enables a set of mutually untrusting nodes to agree on a forever-growing sequence of transactions, to realize the first decentralized currency, Bitcoin~\cite{nakamoto2008bitcoin}. 
Later, many permissionless blockchain platforms like Ethereum~\cite{pos_eth} adopt the same open model (\ie, not knowing identities of nodes) as Bitcoin, while Consortium blockchains like Hyperledger Fabric~\cite{androulaki2018hyperledger}, Quorum~\cite{quorum}, BCOS~\cite{li2023fisco}, and Corda~\cite{brown2016corda} support the permissioned model. The permissioned model allows a set of organizations that wish to keep data confidential to construct a consortium and maintain a blockchain for storing private data and running distributed applications (also known as smart contracts) like cross-border exchange~\cite{islam2022low}, goods tracing~\cite{assaqty2020private}, and supply chain~\cite{wu2023high}.

Despite the permissioned features, some consortium blockchains processing financial data or supply chain data require a higher level of confidentiality, requiring that on-chain information is not exposed to consortium members.
To meet the demands, confidential blockchains~\cite{russinovich2019ccf, yan2020confidentiality, howard2023ccf} utilize Trusted Execution Environments (TEEs) to provide confidentiality for data and applications in consortium blockchains. 
TEEs are CPU extensions that leverage hardware security features, such as state isolation, hardware-assisted memory encryption, and remote attestation, to protect code and data without trusting the underlying privileged system software.
Specifically, consortium nodes run distributed applications in TEEs to process transactions. 
All transactions are encrypted before being sent to the confidential blockchain and are only decrypted and processed in TEEs of blockchain nodes. 
Besides, unauthorized access and modification of transactions and the state of applications by organizations other than the owning and authorized nodes are disallowed. 

However, directly porting consortium blockchains into TEEs does not guarantee confidentiality, due to two persistent limitations inherent in TEEs. First, TEEs are vulnerable to micro-architectural side-channel leaks. Attackers can exploit state changes at either the micro-architectural level (e.g., CPU caches~\cite{brasser2017software,gotzfried2017cache,schwarz2017malware,moghimi2017cachezoom}) or architectural level (e.g., page faults~\cite{van2017telling, xu2015controlled} and DRAM contention~\cite{wang2017leaky}) to infer sensitive information. Second, TEEs lack state continuity guarantees. As TEEs store persistent states on untrusted storage, adversaries can replay outdated states, effectively rolling back execution~\cite{wang2022multi,matetic2017rote,niu2022narrator}.

These issues are inherently challenging to resolve, and to date, there are few widely deployed solutions to address them. Yet, they pose significant threats to the confidentiality of consortium blockchains~\cite{qi2024sok,li2024sok}. To understand their implications, we scrutinize four existing Confidential blockchains: Fabric Private Chaincode (FPC)~\cite{brandenburger2018blockchain}, Oasis~\cite{oasis2020}, CCF~\cite{howard2023ccf,russinovich2019ccf}, and CONFIDE~\cite{yan2020confidentiality}. We summarize and categorize two types of attacks: \eiatt attacks, which exploit side-channel vulnerabilities, and \eratt attacks, which leverage state rollback vulnerabilities in TEEs, to breach confidentiality.

\bheading{Our approach.}
Instead of addressing the two issues with general solutions, which introduce significant performance overhead and system complexity (see \secref{sec:related}), we propose a holistic approach with a novel architecture for Confidential blockchains, called \sysname.

\sysname introduces \textit{delegated execution} to defend against \eiatt attacks. 
The delegation scheme allows a client to appoint trusted nodes (called \textit{delegators}) to execute its transactions and synchronize the execution results with others, minimizing the \eiatt attack surface exposed to potentially malicious nodes. Other nodes can then trust the correctness of these results without re-execution, as TEEs ensure execution integrity. 
This scheme is inspired by the \textit{asymmetric trust}~\cite{Asymmetric, cachin2021asymmetric}, in which a client can declare a list of trusted nodes to order and execute its transaction for better security, privacy protection, as well as system efficiency. This design philosophy is widely used in blockchains such as Ripple~\cite{ripple}, Stellar~\cite{Stellar}, and Quorum Besu~\cite{besu}, however, it is first used in confidential blockchains. (See more details in \secref{sec: delegated execution}.) 
In addition, \sysname adopts the \textit{order-then-execute} architecture to address \eratt attacks. This architecture ensures that transactions are ordered before execution and prevents ordered transactions from being reverted, effectively mitigating rollback attacks at the blockchain level.

Despite the simplicity of these ideas, incorporating them into a confidential blockchain system is non-trivial. 
Delegated execution inevitably results in high delegation overhead, as transactions need to be sent to the delegators and synchronized to other nodes. Moreover, the delegation process is vulnerable to denial-of-delegation (DoD) attacks, where malicious delegators may refuse to execute transactions and block the system. To address these challenges, we propose a \textit{location-aware concurrent execution} mechanism that leverages the state dependency graph (SDG) to parallelize transaction execution, reducing the delegation overhead. We also design a \textit{delegation failure handler} to detect and recover from DoD attacks, ensuring the liveness of the system.

We developed a prototype implementation of \sysname with about 1K LoC C++ code atop the consortium blockchain platform, FISCO BCOS (referred to as BCOS for short)~\cite{FISCO_BCOS}.
We revised the logic of the underlying architecture including the consensus and execution layer. 
For the consensus layer, we constructed VR protocol~\cite{liskov2012viewstamped} based on the Raw PBFT protocol of BCOS. 
For the execution layer, we added the synchronization module to synchronize execution results from delegators to other nodes.
Considering that there are no competing implementations using confidential virtual machines, \eg, AMD SEV-SNP, Intel TDX~\cite{intel_tdx_website}, and that the performance of Intel SGX-based implementations~\cite{howard2023ccf, cheng2019ekiden, yan2020confidentiality} is not comparable due to varying hardware mechanisms, we did not compare \sysname with other existing systems.
We conducted extensive experiments to evaluate \sysname with AMD SEV-SNP~\cite{amd_sev_website} in both LAN and WAN conditions and compared it with BCOS. 
The results show that in a WAN environment with 3 nodes, \sysname can achieve about 4k throughput and keep the latency as low as 0.4-0.5s. 
Compared to the performance of BCOS, \sysname exhibits performance degradation introduced by SEV-SNP and the synchronization module.

\bheading{Our contributions.} Our contributions are as follows:
\begin{packeditemize}
\item  We identify two attacks, \ie, \eiatt and \eratt attacks, which leverage state rollback and side-channel vulnerabilities in TEEs, to breach confidentiality.
We scrutinize four existing Confidential blockchains. 

\item We propose \sysname, a novel Confidential blockchain incorporating delegated execution and order-then-execute architecture to mitigate \eiatt and \eratt attacks.

\item We present two mechanisms, \ie, location-aware execution, and delegation failure handler. The former parallelizes transaction execution to reduce the performance overhead of introduced delegation execution; the latter ensures system liveness in the presence of malicious delegators.  

\item We implement \sysname atop BCOS and evaluate its performance on a public cloud platform through extensive experiments. Evaluation results show that \sysname has a competitive performance.
\end{packeditemize}
\section{Background}
\subsection{Consortium Blockchain} \label{subsec:Consortium Blockchain} 
Consortium blockchains~\cite{androulaki2018hyperledger, quorum, li2023fisco, brown2016corda}, also known as permissioned blockchains, allow identity-known nodes to maintain an append-only ledger. Unlike public blockchains (\eg, Bitcoin), consortium blockchains restrict on-chain data access and validation to authorized nodes, whose participants typically include commercial enterprises or government entities. 
The permissioned features make them ideal for applications, such as cross-border exchange~\cite{islam2022low}, goods tracing~\cite{assaqty2020private}, and supply chain~\cite{wu2023high}, which prioritize privacy, scalability, and regulatory compliance. 
In such applications, a node usually processes transactions from its clients, who trust only that node. 
This \textit{asymmetric trust} inspires us to propose \textit{delegated execution}. 

Consortium blockchains consist of two core components. The first is Byzantine Fault Tolerant (BFT) consensus, which enables mutually distrusting nodes to agree on the same transaction sequence.   
Notable BFT consensus protocols include PBFT~\cite{Castro:2002:PBFT} and HotStuff~\cite{yin2019hotstuff}. 
By executing ordered transactions, nodes can reach the same state. Specifically, according to the sequence of ordering and execution, there are two architectures: \textit{order-then-execute} and \textit{execute-then-order}. 
BCOS, Corda, and Quorum adopt the former, whereas Hyperledger Fabric uses the latter. 
The second is smart contracts, which are programs to be executed on blockchains. Smart contracts are usually written in high-level languages like Solidity~\cite{Solidity} and compiled into bytecode and executed within virtual environments such as the Ethereum Virtual Machine (EVM)~\cite{evm_website}. 
Compared to simple payment transactions, smart contracts typically involve complex execution patterns, which can inadvertently leak information to an attacker to infer sensitive data.

\subsection{Trusted Execution Environment}
TEEs are CPU extensions that offer applications a secure execution environment to prevent running code or data from being accessed or tampered with by unauthorized entities. TEEs leverage techniques like hardware-assisted isolation, memory encryption, and remote attestation. 
Existing TEE platforms can be divided into two classes: process-based TEEs and Virtual Machine (VM)-based TEEs. 
Process-based TEEs like Intel SGX~\cite{intel_sgx_website} support process-based applications, whereas VM-based TEEs, such as AMD SEV~\cite{amd_sev_website} and Intel TDX~\cite{intel_tdx_website}, provide a virtualized environment running operating systems. Intel SGX is among the first generation of commercial TEE platforms, however, more and more manufacturers (\eg, Intel) are turning to VM-based TEEs. Thus, in this work, we implement our design atop VM-based TEEs, \ie, SEV-SNP, but, can also be extended to processor-based platforms.  

\section{Dissecting Existing Confidential Blockchains} \label{Consortium Blockchains}
We scrutinize the confidentiality of four confidential blockchains, including Fabric Private Chaincode (FPC)~\cite{brandenburger2018blockchain}, Oasis~\cite{oasis2020}, CCF~\cite{howard2023ccf, russinovich2019ccf} and CONFIDE~\cite{yan2020confidentiality}. Specifically, we propose two attacks, \eiatt attack and \eratt attack, which utilize side-channel attacks and state rollbacks of TEEs, respectively, to violate confidentiality.

\subsection{\EIatt Attack}
In an \eiatt attack, the adversary utilizes side-channel leakage, such as CPU cache utilization~\cite{gotzfried2017cache,moghimi2017cachezoom,brasser2017software,schwarz2017malware} and memory access patterns~\cite{van2017telling,xu2015controlled}, during the execution of the smart contracts to infer target sensitive information without decrypting the ciphertext directly. This attack does not influence the normal execution of the smart contracts~\cite{fei2021security,li2022systematic} but can be exploited to undermine fairness~\cite{desai2021secauctee}. 

\begin{figure}[t]
\centering
\includegraphics[scale=0.21]{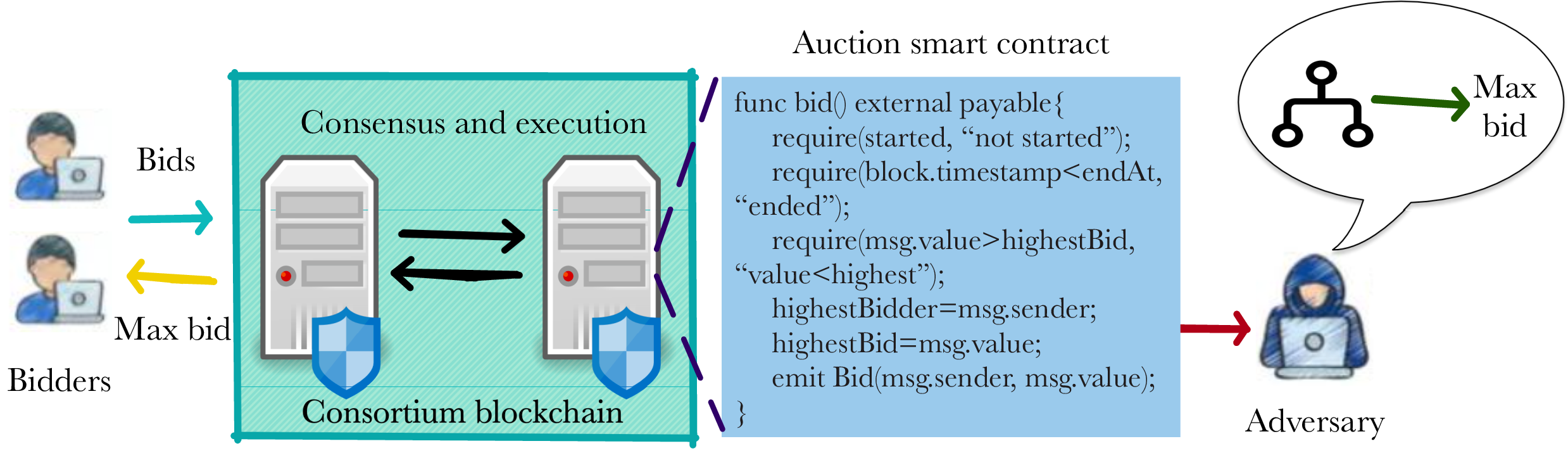}
\caption{An auction example running on a smart contract. The attacker can control one node, observe the running applications, and infer the secret to break fairness. }
\label{fig:toy}
\end{figure}

To illustrate \eiatt attacks, \figref{fig:toy} shows an on-chain auction implemented as a smart contract. 
Each bidder sends their bid in an encrypted transaction to the TEE nodes. 
Then, the transaction is decrypted inside the TEE and handled by the auction smart contract. 
The contract compares the received bid \textit{msg.value} in the transaction with the current highest bid \textit{highestBid} (Line 4-6), and updates the \textit{highestBid} and \textit{highestBidder} if the newly received bid is higher. 

During the above process, an adversary may control one or more TEE nodes, record the memory access patterns observed during the execution of the auction smart contract, and infer the control flow of the smart contract. For example, if the adversary can determine, by performing side-channel inference, whether the \textit{highestBid} and \textit{highestBidder} are updated during execution, she may be able to leverage such information to win the auction at lower cost. Thus, \eiatt attacks may lead to a breach of fairness.

\bheading{Existing platforms and their defenses.} 
We found that all existing Confidential blockchains, including FPC, Oasis, CCF, and CONFIDE, are susceptible to this attack.
In particular, FPC, CCF, and CONFIDE do not consider \eiatt attacks in their threat model and consequently have no defense. 
Oasis considers execution inference caused by side-channel attacks~\cite{chen2019sgxpectre, van2018foreshadow, wang2017leaky} and suggests isolating the SGX-based compute nodes from the shared environment to constrict access from malicious nodes. 
However, Oasis does not provide a detailed design, making it impossible to evaluate the effectiveness of the defense~\cite{oasis2020}. 

\subsection{\ERatt Attack} \label{subsec:rollback}
In an \eratt attack, an adversary crafts transactions, executes these transactions, and locally observes the outputs (\eg, winning bids from an auction), and then selects the preferred transaction (\eg, the lowest bid with priority to buy the item).
If execution order of transactions is not finalized by consensus, the adversary can repeat this procedure to execute different transactions one by one, analyze outputs and state updates by terminating and restarting the TEE running smart contracts, and eventually infer sensitive information to violate transaction confidentiality by exploiting rollback issues of TEEs~\cite{matetic2017rote, niu2022narrator, Achilles}. 
For instance, in the aforementioned auction case, the adversary can incrementally raise bids and compare them with the current highest bid \textit{highestBid} until finding a bid value marginally that surpasses the \textit{highestBid}.

The susceptibility of this attack is influenced by the underlying system architecture. As described in 
\secref{subsec:Consortium Blockchain}, existing blockchain architectures can be categorized into two types: \textit{execute-then-order} and \textit{order-then-execute}. 
In the former, nodes execute transactions, simulate state updates, and subsequently run consensus to commit the final state. The simulated state prior to consensus makes this architecture vulnerable to \eiatt attacks. 
Conversely, in the latter architecture, nodes first commit transactions through consensus and then deterministically execute the ordered transactions. The finality of committed transactions prevents the adversary from altering executed transactions, while deterministic execution ensures that repeated executions do not reveal differentiable outputs or state updates. As a result, the \eratt attack is effectively mitigated in this architecture.

\bheading{Existing platforms and their defenses.}
{CCF and CONFIDE use the order-then-execute architecture, making them resilient to the execution-replay attack. Perhaps for this reason, they do not consider \eratt attack in the design.
By contrast, FPC and Oasis adopt the execute-then-order architecture, making them vulnerable to \eratt attacks. 
To defend against these attacks, compute nodes in Oasis encrypt the outputs and state updates after processing transactions and reveal the decryption key to decrypt them only after monitoring that they are committed. 
However, this approach does not provide full protection because the adversary can leverage \eiatt attacks to infer secret information during execution without decrypting the ciphertext directly. 
The leaked side-channel information enables the adversary to craft transactions to launch \eratt attacks and violate the confidentiality guarantee of the system.
Furthermore, Oasis requires the compute node to be always online to reveal the decryption key, posing a strict requirement for availability and synchronous networks. FPC introduces human coercion to modify the smart contract code  (\eg, by adding a ``close" function) to establish barriers against such attack.
However, this method introduces significant human costs for smart contract development.

\bheading{Summary.} Both \textit{\eiatt} and \textit{\eratt attacks} can violate confidentiality of confidential blockchains. 
Specifically, the root cause of \eiatt attacks is the inherent, inevitable side-channel issues of TEEs.
All four platforms are susceptible to \eiatt attacks because none of them provides a complete architectural design to defend against side-channel issues.
Meanwhile, the root cause of \eratt attacks is the rollback issues of TEEs.
This attack can be launched in FPC and Oasis with an execute-then-order architecture. 
All of these motivate us to design a confidential blockchain that efficiently mitigates these attacks. 

\section{System Model and Problem Statement}
\subsection{System Model}
\label{sec:system models}
We consider a consortium blockchain system with a set of $n = 2f + 1$ nodes, denoted by $\mathcal{N} = \{p_1, p_2, ..., p_n\}$. Each node $p_i$ is equipped with a TEE-enabled machine.  
We also consider a set of clients, denoted by $\mathcal{C} = \{c_1, c_2, ...\}$.
Clients send their transactions to nodes for processing. 
We assume there exists a public-key infrastructure (PKI) among nodes and clients for distributing public keys. 
Specifically, each node $p_i$ has a key pair $(pk_i, sk_i)$ used
for signatures and encryption, where $pk_i$ is publicly available. 
For nodes with TEEs, the $sk$ is stored inside the TEEs.
A message $m$ signed with $sk_i$ is denoted by $\langle m \rangle_{\sigma_i}$. 
We assume a shared key pair $(pk_0, sk_0)$ within nodes' TEEs for transaction encryption. 
Specifically, clients use the public key $pk_0$ to encrypt transactions, while the TEEs use the private $sk_0$ to decrypt them. 

\bheading{Network model.} We assume normal nodes are fully and reliably connected: every pair of nodes is connected with an authenticated and reliable communication link.
We adopt the partial synchrony model of Dwork \etal \cite{dwork1988consensus}, widely used in many BFT consensus protocols~\cite{Castro:2002:PBFT, yin2019hotstuff, fast-hotstuff}.
In the model, there is a known bound $\Delta$ and an unknown Global Stabilization Time (\textsf{GST}), such that after \textsf{GST}, all message transmissions between two nodes arrive within a bound $\Delta$.

\subsection{Threat Model} \label{subsec:threat}
We assume that at most $f$ nodes are Byzantine, \ie, behaving in arbitrary ways, at any time, while the remaining ones are honest, \ie, strictly following the protocol.  
We consider the worst case, in which all Byzantine nodes can be controlled by a single adversary $\mathcal{A}$. Nevertheless, Byzantine nodes do not have full control of TEEs. Following prior studies, we make the following assumptions for TEEs: 

\begin{packeditemize}
    \item \textbf{Secure cryptographic code.} We assume the adversary cannot break cryptographic primitives (\eg, encryption or signatures), via cryptographic analysis or side-channel attacks. This implies that the cryptographic implementation inside the TEEs follows the state-of-the-art constant-time paradigm and is free of side-channel vulnerabilities. 
    
    \item \textbf{Secure TEE hardware.} TEEs are assumed to provide integrity and confidentiality guarantees in accordance with their hardware capabilities as prior works~\cite{fides, Achilles}. Specifically, TEEs support remote attestation, \ie, assuring the authenticity of TEE hardware and the initial states of TEE software. Moreover, TEEs adopt encrypted memory and CPU isolation, defending against attacks from both system software and system administrators. 
   
    \item \textbf{Control-flow side-channel leakage.}
    Micro-architectural side-channels through CPU cache~\cite{gotzfried2017cache,moghimi2017cachezoom,brasser2017software,schwarz2017malware}, page fault manipulation~\cite{xu2015controlled,shinde2016preventing,wang2017leaky,van2017telling,werner2019severest}, interrupt timing~\cite{van2018nemesis,li2021cipherleaks} are considered in scope. The adversary can observe memory access patterns of transactions and thus infer the control flow of their execution. However, as we assume secure cryptographic implementations, which is the practice case, cryptographic secrets cannot be leaked through side channels.
    
    \item \textbf{Rollback attacks.}
We assume that the state-continuity property of TEEs cannot be reliably preserved. This assumption stems from the fact that modern TEEs either lack monotonic counters~\cite{martin2021adam} or do not employ effective distributed counters~\cite{niu2022narrator} in practice. Consequently, an adversary can roll back the execution of smart contracts to outdated states and replay transactions multiple times.

\end{packeditemize}

\subsection{Problem Statement}
We follow prior works~\cite{COBRA} to model the Confidential blockchain as State Machine Replication (SMR), in which honest nodes work as a deterministic state machine and have the same state (the replicated service state) after executing the same sequence of transactions. 
In particular, nodes decrypt clients' encrypted transactions and execute them in TEEs to achieve confidentiality. 
Thus, we define a \textit{confidential SMR} service model, where nodes' TEEs agree on the same global state $S=\{tx_1, tx_2,..., tx_i\}$, composed of $i$  transactions ordered by their index. 
The confidential SMR service model aims to provide the following properties: 
\begin{packeditemize}
    \item \textbf{Safety.} 
    If two nodes $p_i$ and $p_j$ accept transactions $tx$ and $tx^{\prime}$ with the same index, respectively, then $tx = tx^{\prime}$. 

    \item \textbf{Liveness.} 
    An honest client's transaction $tx$ will eventually be committed and executed by honest nodes. 

    \item \textbf{Confidentiality.} 
    No adversary infers sensitive information from executing transactions.
\end{packeditemize}

The \textit{safety} and \textit{liveness} properties are the standard properties of SMR, while confidential SMR ensures a new property called \textit{confidentiality}. Specifically, to achieve the latter, our protocol has to defend against \eiatt and \eratt attacks under the threat model defined in \secref{subsec:threat}. 
Except for the three properties, our defense against the \eiatt and \eratt attacks should not impose significant overhead on performance, in terms of transaction throughput and latency.

\section{Our Approach}
In this section, we present two approaches: delegated execution and order-then-execute architecture. We then introduce the associated challenges. 

\subsection{Delegated Execution}
\label{sec: delegated execution}
We introduce \textit{delegated execution} to defend against \eiatt attacks. 
Delegated execution is inspired by the concept of \textit{asymmetric trust}~\cite{Asymmetric, cachin2021asymmetric} in distributed systems, in which clients select their trusted quorum of nodes for processing transactions.
For example, stellar consensus protocol (SCP)~\cite{lokhava2019fast, kim2019stellar} captures the \textit{subjective trust} and constructs federated Byzantine quorum systems in which nodes are divided into several quorum slices.
Each node needs to clarify its trusted quorum slices and accept any results gained from this slice.
In each slice, there is guaranteed to be at least one reliable node to process transactions and release the results of execution.
The processing results are synchronized to other quorum nodes trusted by other clients through available and fast network transmission channel.
Thus, the trusted results are ultimately broadcast to the global network and recorded in a ledger with consistent states as trust chain is conducted over the whole SCP network.
Furthermore, the \textit{asymmetric trust} model has been adopted in several permissionless blockchains like Ripple~\cite{ripple}, Stellar~\cite{Stellar} and consortium blockchains like Quorum Besu~\cite{besu}.
These systems justify the reasonableness of \textit{asymmetric trust}.
Moreover, in Confidential blockchains, nodes are not anonymous and have a specific trust basis with each other. 
Therefore, the \textit{delegated execution} scheme is feasible.

Except for the trust basis, delegated execution operates in Confidential blockchains based on the observation that execution integrity provided by TEEs enables nodes to synchronize execution results, thereby mitigating the impact of execution inference attacks from Byzantine nodes.
Specifically, by leveraging remote attestation and reliable communication links offered by TEEs, nodes can synchronize their state across the network without needing to re-execute each transaction on every node. This approach allows clients to appoint trusted nodes for transaction execution, thereby minimizing the \eiatt attack surface exposed to potentially malicious nodes.

\subsection{Order-then-Execute Architecture}
\label{sec: architecture analysis}
We adopt the \textit{order-then-execute} architecture to defend against \eratt attacks. As analyzed in \secref{subsec:rollback}, this architecture inherently mitigates such attacks by ensuring transactions are ordered before execution, thereby eliminating opportunities for adversaries to exploit rollback vulnerabilities during transaction processing.

In this architecture, nodes order transactions without immediate execution to verify validity. Thus, invalid transactions may be included in the blockchain. However, this does not pose a significant problem. During the execution phase, nodes can deterministically discard invalid transactions, ensuring the correctness of the blockchain state. This deterministic execution process is widely supported by most smart contract execution engines, such as EVM~\cite{evm_website} and WASM runtimes~\cite{eosio_blockchain, near_blockchain, EWasm}.
Moreover, consortium blockchains typically require clients to be authenticated.  Such authentication is effective in identifying and mitigating the submission of many invalid transactions. 

\subsection{Challenges and Solutions}
\label{sec: challenges}
Despite the simple idea behind the \textit{delegated execution}, its adoption introduces two key challenges, as outlined below.

\bheading{Challenge 1 \textbf{(C1)}: High delegation overhead.} The delegated execution incurs significant state synchronization overhead when different delegators execute transactions included in blocks. 
In blockchains, transactions are usually batched into blocks (\ie, data structure) to be processed together. This batching strategy can effectively amortize consensus overhead and improve throughput. However, when using the delegated execution scheme, transactions within a block may be assigned to different delegators for execution. As a result, these transactions must be executed sequentially across delegators, following their order. 
After processing, each delegator must synchronize its execution results with others. In a distributed network, such frequent synchronization introduces considerable overhead, severely impacting the system’s efficiency and rendering it impractical for many applications. 

\bheading{Solutions: Location-aware concurrent execution (\secref{sec: execution policy}).} 
This solution aims to improve the concurrency of executions to reduce the high overhead caused by serial state synchronization. 
To this end, we introduce the \textit{concurrency controller}, which adheres to two fundamental principles: i) transactions must be executed by the specified delegator, and ii) the correctness of execution must be guaranteed. 
Guided by the two principles, this solution aims to improve execution concurrency from two aspects: 
\begin{packeditemize}
    \item \textbf{Block level: } A block is divided into multiple sub-blocks, each assigned to one delegator (by Principle 1) for parallel execution. Besides, to ensure that the transaction execution across sub-blocks does not deviate from the original transaction order (by Principle 2),  the concurrency controller incorporates \textit{state synchronization mechanisms} to ensure the correctness and uniqueness of execution results.

    \item \textbf{Transaction level: } The concurrency controller uses the transactions' read/write operations to enhance execution parallelism. Specifically, transactions that access disjoint portions of a contract’s state can be executed concurrently without compromising Principle 2. 
\end{packeditemize}

\bheading{Challenge 2 \textbf{(C2)}: Denial-of-Delegation (DoD) attacks.} In this attack, a malicious client can delegate its transactions to a Byzantine delegator, who refuses to execute the transaction. As said previously, in SMR, nodes have to deterministically execute ordered transactions one by one to have the same state given the same sequence of transactions.  
Thus, the denial of executing the transaction will block all subsequent dependent transactions, leading to a liveness violation. 

\bheading{Solution: Delegation failure handler (\secref{sec: delegation failure handler}).} 
The failure handler contains two parts: failure detection and optimistic abort. 
First, nodes use a timeout mechanism to detect execution failure. Specifically, nodes set a timer after a block is ordered. 
When the timer is triggered, they broadcast a timeout message for execution. 
Second, when receiving $f+1$ timeout messages of execution, all nodes can abort execution of the delegated transaction and subsequent dependent transactions. 
The leader collects both successful execution results and aborted ones, and includes them in the following block to ensure all nodes agree on the same system's state.  
\section{Design Details}

\subsection{Overview}
We incorporate the above ideas into a novel Confidential blockchain architecture, termed \sysname. 
As shown in \figref{architecture}, nodes are equipped with TEEs to process transactions sent from users. Users interact with \sysname (\ie, submitting transactions and receiving replies) through the client interface.

\sysname processes transactions in two phases: ordering and execution. In the former, \sysname adopts a carefully designed TEE-based confidential consensus protocol in prior works~\cite{wang2022engraft, keyRecovery} (See more in \secref{subsec:implementation}), while in the latter, \sysname uses delegated execution. Specifically, to deal with the challenges above, \sysname uses location-aware concurrent execution (\secref{sec: execution policy}) and {delegation failure handler} (\secref{sec: delegation failure handler}). 

\begin{figure}[t]
\centering
\includegraphics[scale=0.2]{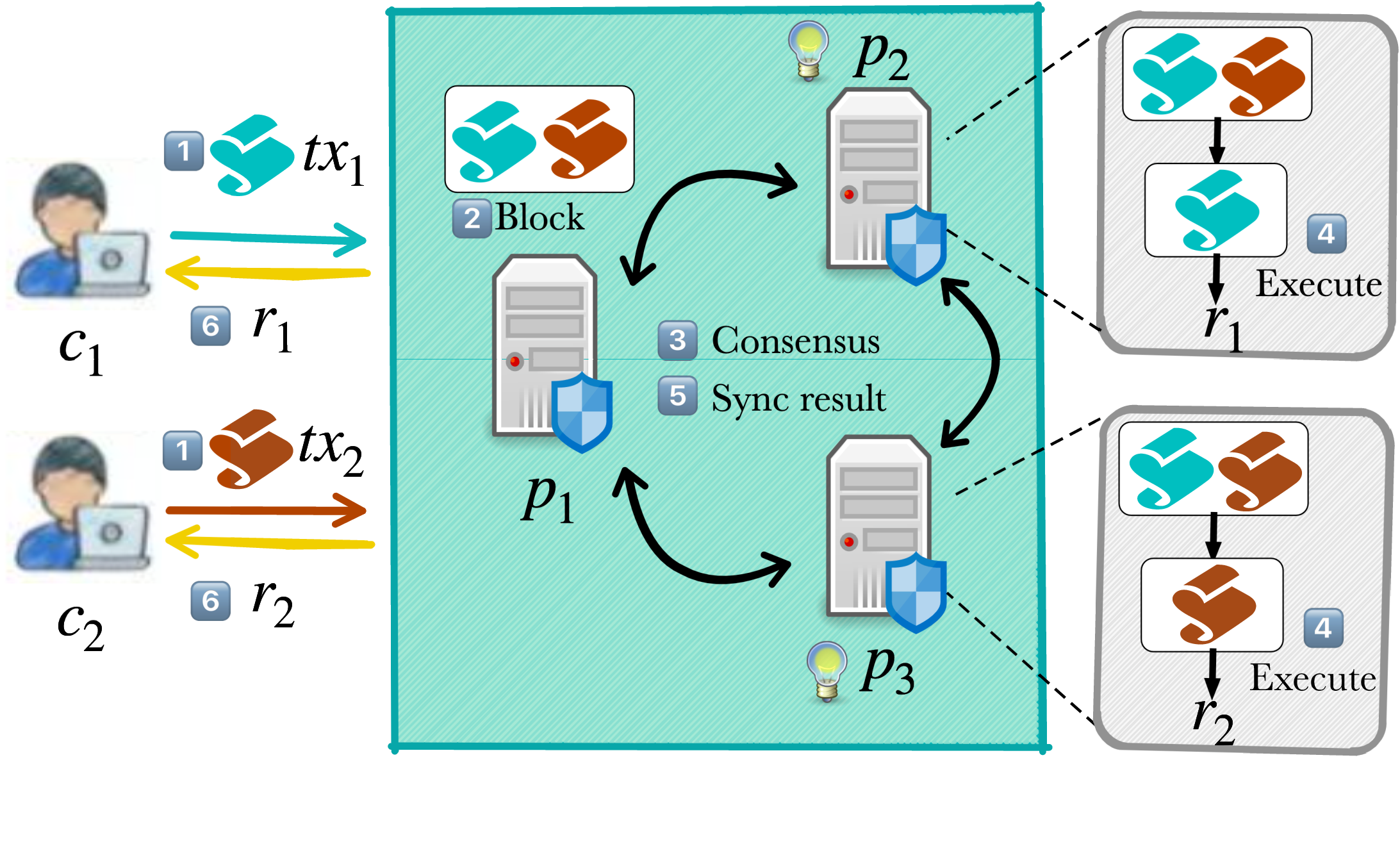}
\caption{A overview of \sysname architecture. For demonstration purposes, we set up two clients $c_1, c_2$ and three nodes $p_1, p_2, p_3$, all of which may become primary nodes or delegators during the consensus and execution phase.}
\label{architecture}
\vspace{-0.5cm}
\end{figure}

\bheading{Transaction flow.} \figref{architecture} shows an overview of the transaction workflow, in which three nodes $p_1$, $p_2$, and $p_3$ process two independent transactions $tx_1$ and $tx_2$ from clients $c_1$ and $c_2$, respectively. The nodes $p_2$ and $p_3$ are delegators for $tx_1$ and $tx_2$, respectively, and the node $p_1$ is the primary node in the current round.

\vspace{1mm} \noindent \blackding{1} Clients $c_1$ and $c_2$ send encrypted transactions $tx_1$ and $tx_2$ to nodes $p_1$, $p_2$, and $p_3$ for processing. 

\vspace{1mm} \noindent \blackding{2} The primary node $p_1$ batches transactions $tx_1$ and $tx_2$ into one block $B$ and then coordinates with nodes $p_2$ and $p_3$ to commit it to the blockchain by running the {confidential consensus protocol}.

\vspace{1mm} \noindent \blackding{3} $p_1$ generates the state dependency graph (SDG) and sends it to other nodes.
All nodes execute transactions based on the SDG to determine the execution sequence.
In this case, delegators $p_2$ and $p_3$ extract $tx_1$ and $tx_2$ from the block and execute them to obtain execution results $R_1$ and $R_2$, respectively.

\vspace{1mm} \noindent \blackding{4} The execution results $R_1$ and $R_2$ are synchronized to other nodes. If a malicious delegator, \eg, $p_2$, refuses to execute or send out the execution results, the delegation failure handler will abort the execution of transaction $tx_1$. All nodes will only accept execution result $R2$ and update their state. 

\vspace{1mm} \noindent \blackding{5} After synchronizing execution results, nodes return the execution results to the clients who sent the corresponding transactions.

\subsection{Location-aware Concurrent Execution}
\label{sec: execution policy}

\subsubsection{Trust Isolation Mechanism}
Transactions (including smart contracts) should be executed at nodes in the corresponding trust domain $\mathcal{D}$, referred to as the trust isolation mechanism. 
Specifically, a client specifies a $\mathcal{D}$ that includes a group of nodes while deploying a smart contract.
This action announces that these nodes are trusted by the initiator of the smart contract, and each transaction calling this smart contract has to be processed among them.
Thus, when initiating transactions to invoke this smart contract, clients must select a node called delegator $d_i$ within $\mathcal{D}$ as the transaction execution node. 
Specifically, when there is a call to an external contract in the smart contract, \ie, a cross-contract call, which is usually a read of the state of smart contracts.
The external call will be executed at $d_j$ in another trust domain specified by the external contract.

\subsubsection{State Dependency Graph}
The state dependency graph (SDG) is utilized for the concurrent execution of transactions.
We define the transaction as a set of to-be-read states $R(tx)$ and to-be-written states $W(tx)$. 
The transactions packaged into a block have a fixed order.
In a linear execution pattern, the order of transactions represents the sequence of their execution.
If transaction $tx_i$ precedes transaction $tx_j$ in the order, it is denoted as $T(tx_i)<T(tx_j)$.
We define the state dependency as follows:

\begin{definition}
    Given two transactions $tx_i$ and $tx_j$ with $T(tx_i)<T(tx_j)$, there is a state dependence between the two if there is an intersection between $W()$ of either party and $W()$ or $R()$ of the other party, recorded as $tx_i \rightarrow{tx_j}$.
\end{definition}

When multiple transactions are executed concurrently, state dependency between transactions can lead to conflict and uncertainty of execution results.
The SDG can be used to represent the state dependence between transactions.
The definition of it is given below.

\begin{definition}
The state dependency graph is a directed acyclic graph where nodes represent transactions within a block and directed edges represent state dependencies between transactions. It can be formalized as $G=(N, V)$, where $N=\{ tx_1, tx_2,..., tx_n\}$ and $V=\{(tx_i, tx_j)|tx_i \rightarrow{tx_j}\}$.
\end{definition}

We assume that all read/write operations of a smart contract are known, while the concurrency controller can obtain the specific read/write operations contained in each transaction without executing the transaction.
We use the SDG generation algorithm in~\cite{yao2015dgcc} to generate a raw SDG. 
The SDG is generated by the node proposing the new block during the consensus phase (\ie, the primary node) and synchronized to other nodes after the block consensus is completed.
In conjunction with location information related to delegators, the SDG is further decomposed to generate numerous subgraphs utilized by \textit{concurrency controller} module of delegators.

\begin{table}[t]
\centering
\caption{Information about smart contracts and transactions in the example.}
\label{tab:example info}
\begin{adjustbox}{width=1\columnwidth}
\begin{tabular}{c|c|c|c|c}
\toprule
\textbf{Smart contract} & \textbf{Operational state} & \textbf{Transaction} & \textbf{Delegator} & \textbf{Sub-block} \\
\midrule
$S_1$ & $a, E(S_2:b)$ & $tx_1, tx_4$   & $d_1: \mathcal{D}_1$ & $SB_1$  \\
$S_2$  & $b, E(S_3:c)$ & $tx_2, tx_5, tx_6$   & $d_2: \mathcal{D}_2$ & $SB_2$            \\
$S_3$ & $c$ & $tx_3$ & $d_3: \mathcal{D}_3$ & $SB_3$ \\
\bottomrule
\end{tabular}
\end{adjustbox}
\end{table}

\subsubsection{Concurrency Controller}
The concurrency controller executes concurrency at both block and transaction levels based on the SDG, as introduced below. 

\bheading{Concurrency of sub-blocks.} 
When a block is committed, the included transactions are divided into several sub-blocks, each assigned to one delegator. 
Based on these sub-blocks and dependencies across transactions, an SDG is generated for each sub-block, guiding the execution sequence of each transaction.

Sub-blocks can be formally defined as follows: assuming the existence of blocks $B=\{tx_1, tx_2,..., tx_{n-1}, tx_n\}$ and each transaction corresponds to a delegator, then $B$ can be further expressed as $B'=\{tx_1(n_i), tx_2(n_j),..., tx_{n-1}(n_e), tx_n(n_f)\}$. $tx_1(n_i)$ and $tx_2(n_j)$ belong to the same sub-block when and only when $n_i=n_j$, which can be expressed as $SB_i=\{tx_1, tx_ 2\}$.

\iheading{Example}: Table \ref{tab:example info} lists three abstracted smart contracts $S_1, S_2, S_3$, and the corresponding operational states where $E()$ means that there is a cross-contract call in the contract, and in parentheses are the target contract and the contract state to be called. 
We assume that the client sends the following six transactions $\{tx_1, tx_2, tx_3, tx_4, tx_5, tx_6\}$, and assigns a delegator to each transaction.
The six transactions are packaged into one block. 
Eventually, the concurrency controller will divide the block into three sub-blocks $SB_1, SB_2, SB_3$ based on the delegator.

\bheading{Concurrency of read/write operations.}
The essence of a transaction is to read or write to the state of the smart contract. 
From the respective read/write operation sequences of transactions, it can be observed that the operations of $tx_1$ and $tx_3$ on resources do not conflict with each other. 
Therefore, they can be executed in parallel. 
However, the execution of $tx_2$ and $tx_5$ depends on the results of the two transactions above.
So they need to wait until those transactions are executed before proceeding.
\figref{SDG} shows the corresponding read/write set for each transaction and the generated subgraphs of SDG.

All transactions are executed in the order prescribed by SDG.
For example, among the above six transactions, $tx_1$ and $tx_4$ belong to the sub-block $SB_1$, so the execution location of the transaction is $d_1$, and similarly, $tx_2$, $tx_5$, $tx_6$ are executed at $d_2$. $tx_3$ is executed at $d_3$.
When there is a state dependency on the transactions executed at different delegators, the execution of the transactions will be halted until the results of the preceding sequence-dependent transactions are received.

The execution sequence at $d_1$ is: execute $tx_1$ $\to$ detect $tx_2$ has dependency on $tx_1$, send the execution result to $d_2$ $\to$ detect $tx_4$ has dependency on the execution result of $tx_2$, and then start to wait for the execution result $\to$ receive the $StateSyncMsg$ from $d_2$ and update the local state then start executing $tx_4$ $\to$ send the current state of the smart contract to $d_2, d_3$ when execution is complete.
The handling process at other nodes is similar.

\begin{figure}[t]
\centering
\includegraphics[scale=0.12]{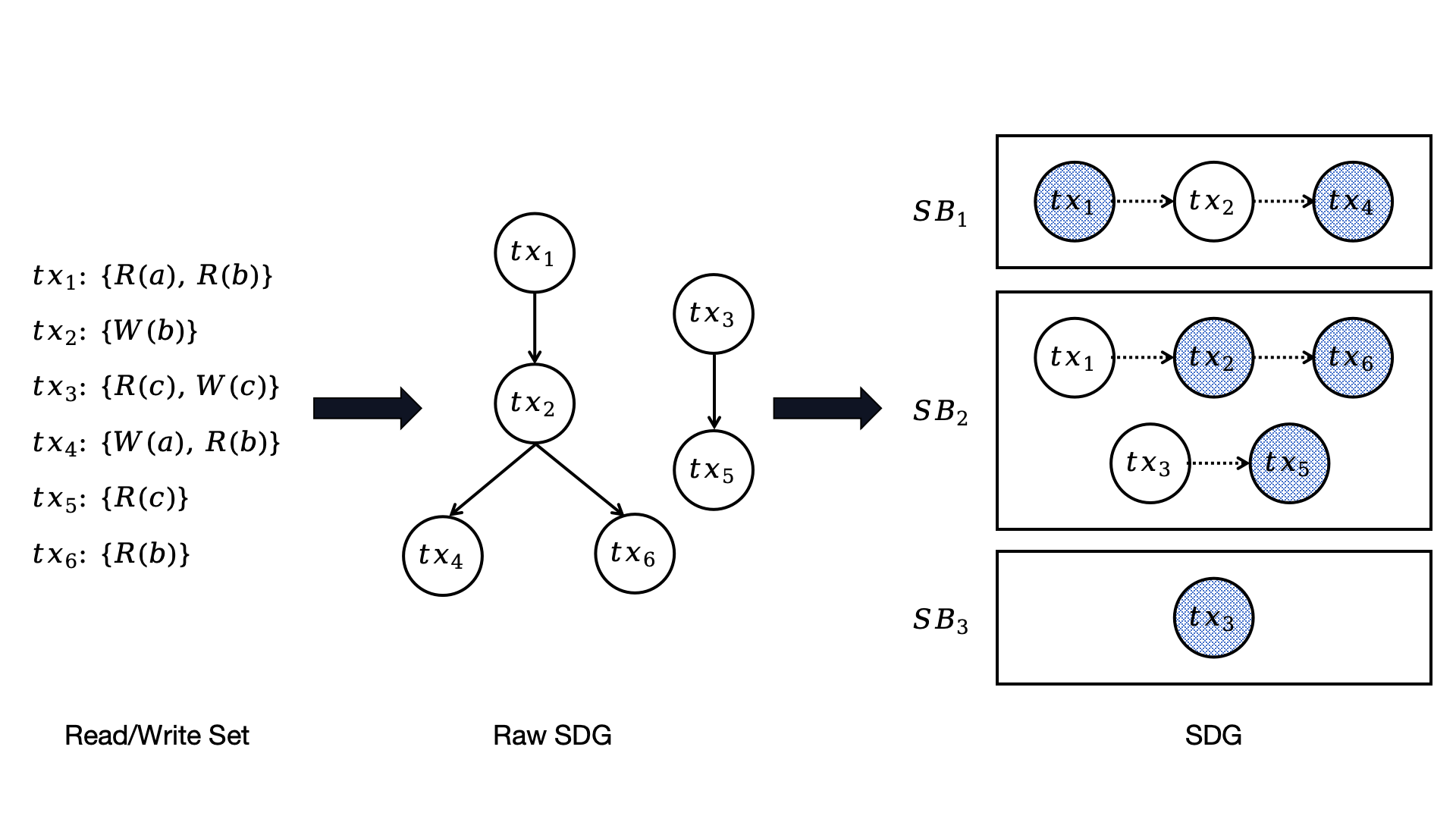}
\caption{SDG of the example. Given a series of transactions $tx_1,..., tx_6$ with fixed order, the primary node analyzes the read/write set of transactions and determines the conflict situation of execution to generate raw SDG. According to the delegators specified by transactions, the raw SDG is split into multiple sub-SDGs to provide parallel support for them.}
\label{SDG}
\end{figure}

\subsubsection{State Synchronization}
We introduce the state synchronization mechanism to synchronize the local execution results to other nodes promptly and ensure that transactions are always executed based on the correct smart contract state.
Overall, state synchronization includes two types of scenarios:

\bheading{Partial synchronization.} 
Suppose a node $p_1$ has not finished executing all the transactions in a sub-block $SB_i$, but there is a transaction $tx_j$ in another sub-block $SB_j$ to be executed in node $p_2$ that has state dependency on the transaction $tx_i$ in $SB_i$. 
In that case, it is necessary to instantly send the execution results of $tx_i$ to $p_2$.

\bheading{Full synchronization.} After a node executes all transactions within a sub-block, it sends the latest local state of smart contracts to all other nodes for synchronization.

After receiving it, the node processes the \textit{StateSyncMsg} to update the smart contract state locally by Algorithm \ref{algorithm}.
The structure of \textit{StateSyncMsg} message is $\langle \mathcal{I}_B, \mathcal{I}_{SB}, r, S, v, h \rangle_{\sigma_1}$, where $\mathcal{I}_B$ is the sequence number of the block.
$\mathcal{I}_{SB}$ is the sequence number of the sub-block divided by $B$.
$\mathcal{I}_B$ and $\mathcal{I}_{SB}$ are used to identify the sub-block.
$r$ represents whether the results stored in the message are for all transactions.
$S$ is the execution result of transactions in $\mathcal{I}_{SB}$.
$v$ is the version of smart contracts.
$h$ is the hash value of $S$ and is used to guarantee $S$ is not revised.

The node determines whether the smart contract is the latest version $v$ before executing to prevent transactions from being executed based on the wrong smart contract state.
Specifically, each transaction is executed based on a specific version of the smart contract.
All nodes maintain a public and single version $v$ based on the current local version of the smart contract.
When a transaction exists that writes a new state to the smart contract, we add one to $v$ of this smart contract.

\begin{algorithm}[t]
\caption{State Update Protocol}
\label{algorithm}
\begin{algorithmic}[1]
    \STATE \textbf{upon} $msg\_queue$ is not null
    \STATE \hspace{1.0em} $handler \gets thread.initial()$
    \STATE \hspace{1.0em} $msg \gets node.msg\_queue.push()$
    \STATE \hspace{1.0em} \textbf{if} $msg == StateSyncMsg$ \textbf{then}
    \STATE \hspace{2.0em} $state \gets handler.get\_state(msg)$
    \STATE \hspace{2.0em} $r \gets handler.get\_r(msg)$
    \STATE \hspace{2.0em} $v \gets handler.get\_version(msg)$
    \STATE \hspace{2.0em} \textbf{if} $r==false$ \textbf{then} 
    \STATE \hspace{3.0em} \textbf{if} $v==v'+1$ \textbf{then}
    \STATE \hspace{4.0em} $handler.overwrite(state)$
    \STATE \hspace{4.0em} $v'=v$
    \STATE \hspace{3.0em} \textbf{else}
    \STATE \hspace{4.0em} $node.msg\_queue.put(msg)$
    \STATE \hspace{2.0em} \textbf{else}
    \STATE \hspace{3.0em} \textbf{if} $v>v'$ \textbf{then}
    \STATE \hspace{4.0em} $handler.overwrite(state)$
    \STATE \hspace{4.0em} $v'=v$
    \STATE \hspace{3.0em} \textbf{else}
    \STATE \hspace{4.0em} discard $msg$
\end{algorithmic}
\end{algorithm}

\subsection{Delegation Failure Handler}
\label{sec: delegation failure handler}
This section introduces the delegation failure handler to ensure that Byzantine delegators cannot disrupt transaction execution and compromise system liveness. 
Specifically, suppose a Byzantine delegator does not promptly execute the transactions in sub-blocks. 
In that case, the subsequent transactions that should be executed based on the execution results of these transactions will be blocked.
\sysname can detect failed transactions and achieve consensus on the execution status of all transactions within the block globally. 
Subsequently, the primary node continues processing the transactions to ensure the liveness of \sysname.
The handler contains two components: failure detection and optimistic abort.

\bheading{Failure detection.}
When a new block $B$ completes consensus, the primary node notifies other nodes of the consensus result, \ie, $B$ has been committed to the blockchain. 
These nodes start a timer upon receiving this notification, and the delegators begin executing the transactions in the sub-blocks. 
The timeout mechanism introduces a time threshold $\theta$. Each node checks whether its local smart contract state is up-to-date if the accumulated computation time $t$ exceeds $\theta$. 

Specifically, as transactions in the block are executed, the version numbers of the corresponding smart contracts should monotonically increase. From the transactions in the block, the node can compute the expected version number $v'$ of the smart contracts after all transactions are successfully executed. Assuming the local version number of the smart contracts is $v$ when the timeout mechanism is triggered, the node compares $v$ with $v'$. If $v \neq v'$, this indicates that some sub-blocks failed to execute. 
The node then sends a \textit{timeout} message to the primary node. 
This message includes the index of transactions where the failure occurred.

\bheading{Optimistic abort.}
Upon receiving $f+1$ \textit{timeout} messages, the primary node begins collecting the execution results of the transactions within the block.
First, the primary node collects local transaction execution results, which originate from two sources: local execution and synchronization with other nodes. 
Next, it extracts transaction execution information from the timeout messages received from other nodes. 
Based on this information, the primary node compiles a list of all successfully executed transaction results and the failed transaction index. 
This summary is then packaged into a new block as a transaction. 
During the new consensus phase, the primary node broadcasts this new block to other nodes for consensus. 
Upon receiving the block, other nodes verify the transaction execution status and check if their state aligns with the reported status. If a node finds its state lagging, it requests the missing transaction execution results from other nodes to synchronize its state.

The above mechanism effectively prevents Byzantine delegators from compromising the overall system's liveness by refusing to execute transactions, as it discards transactions that remain unexecuted for an extended period.
\section{Correctness Analysis}
\subsection{Safety Analysis}
\sysname mainly includes two parts: consensus layer and execution layer, from which we analyze the testimony on which \sysname guarantees the safety attribute.
Since \sysname leverages a carefully designed confidential consensus protocol to provide ordering service, we focus on the safety of the execution layer in this section. For the consensus layer, we have Lemma~\ref{lemma:ordering}.

\begin{lemma}[Consensus Safety~\cite{wang2022engraft, keyRecovery}.]\label{lemma:ordering}
    The TEE-based Byzantine fault tolerance protocol generates a globally consistent order for received transactions before committing to the execution layer.
\end{lemma}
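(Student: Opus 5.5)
The lemma is imported from the TEE-based consensus protocol that \sysname uses for ordering~\cite{wang2022engraft, keyRecovery}. I would therefore argue by reduction rather than re-deriving BFT safety. The goal is to show that the ordering layer of \sysname is an instance of that protocol under the system and threat model of \secref{sec:system models} and \secref{subsec:threat}. Its safety theorem then gives the claim: if two honest nodes commit transactions $tx$ and $tx'$ at the same index, then $tx=tx'$, and every committed prefix handed to the execution layer is identical across honest nodes.

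\textbf{Step 1: match the assumptions.} We have $n=2f+1$ nodes with at most $f$ Byzantine, partial synchrony with an unknown \textsf{GST}, and a PKI. Consensus messages are signed inside TEEs with $sk_i$, and remote attestation guarantees each TEE runs the intended protocol code. The crucial point is that TEE integrity (the \emph{Secure TEE hardware} assumption) stops a Byzantine host from making its enclave equivocate. Byzantine behavior is thereby reduced to omission or delay, which is what makes $f+1$ quorums in a $2f+1$ system sufficient. This is the standard argument behind the VR-style protocol built on BCOS's PBFT (\secref{subsec:implementation}).

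\textbf{Step 2: handle rollback, which I expect to be the main obstacle.} The threat model explicitly allows the adversary to roll back TEE state, and a rolled-back enclave could re-vote in a view or slot it already voted in. This would break the non-equivocation argument of Step 1. My plan is to rely on the mechanism of~\cite{wang2022engraft, keyRecovery}: on restart, an enclave does not trust its persisted state but performs a recovery that retrieves state from a quorum of $f+1$ nodes. This recovery is analogous to VR's recovery protocol with nonces. Any two quorums of size $f+1$ intersect in at least one honest node, so a recovering enclave learns every view and slot it previously voted in. Rollback therefore cannot produce conflicting votes.

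\textbf{Step 3: conclude safety and ordering.} With non-equivocation established, the usual quorum-intersection argument shows that two different blocks cannot both gather $f+1$ commit votes at the same height, whether in the same view or across views via the view-change certificate. This yields safety of the order. Because the primary's block includes the SDG and sub-block partition deterministically derived from the ordered transactions, the execution layer receives identical input at every honest node, which is the "globally consistent order before committing to the execution layer" asserted in Lemma~\ref{lemma:ordering}.
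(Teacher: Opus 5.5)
Your reduction matches the paper's route: the paper gives no self-contained proof of Lemma~\ref{lemma:ordering} and imports it from~\cite{wang2022engraft, keyRecovery}, adding only that the VR protocol runs in memory inside the TEEs without reading or writing external storage, so there is no persisted state to roll back. Your Steps~1--2 (TEEs reduce Byzantine faults to omission, and restarted enclaves recover from peers rather than from untrusted storage) are a more explicit version of that same justification, with two minor slips that do not affect the reduction: with $n=2f+1$, two $(f+1)$-quorums are only guaranteed to share one node, which may sit on a Byzantine host, so the intersection step relies on that node's enclave being correct rather than on the host being honest; and the paper sends the SDG to other nodes after consensus rather than inside the ordered block, which is in any case irrelevant to this lemma.
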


The nodes of the VR protocol running in the TEEs do not need to interact with the external hard disk, \ie, read or write data, so that the VR protocol can keep state continuity during transaction consensus.
The adversary cannot replay a transaction that has completed consensus.
We have Lemma~\ref{lemma:state consistency}.
Since \sysname follows the order-then-execute architecture, the execution layer receives transactions in the exact order. 

\begin{lemma}[Execution Safety.]\label{lemma:state consistency}
    The state of all nodes is ultimately consistent whether or not the delegator successfully executes the transaction.
\end{lemma}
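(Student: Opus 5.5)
The argument splits on whether the delegators of a committed block $B$ finish their sub-blocks before the timer threshold $\theta$ fires at honest nodes. Both cases start from Lemma~\ref{lemma:ordering}: every honest node receives the same committed sequence of blocks, and hence the same transaction order inside $B$. Both cases also use the fact that the primary's SDG for $B$ is fixed by consensus together with $B$. So all honest nodes agree on the sub-block partition $SB_1,\dots,SB_k$, on the dependency edges, and on the expected final version $v'$ of every touched contract.

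\textbf{Case 1 (successful delegation).} I first argue that the result $S$ produced by each honest or Byzantine-hosted delegator is the unique result of executing its sub-block in SDG order. Remote attestation and the secure-TEE-hardware assumption give execution integrity. Deterministic execution gives uniqueness. The SDG guarantees that any two conflicting transactions are ordered as in the linear order, so the concurrent schedule is conflict-equivalent to serial execution of $B$.

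Next I show that every honest node ends in that state by examining Algorithm~\ref{algorithm}, with an induction on the version number $v$ of each contract. A partial message ($r=false$) is applied only when $v=v'+1$ and is otherwise requeued. Hence partial updates are applied exactly in version order, with no gaps and no reordering. A full message ($r=true$) is applied only if it is newer, and stale ones are discarded. Signatures and the hash $h$ prevent tampering with $S$ in transit. Reliable links after \textsf{GST} ensure that all messages arrive. Therefore every honest node's version reaches $v'$ with the same contents.

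\textbf{Case 2 (failed delegation).} Some delegator withholds its results. Honest nodes then detect $v\neq v'$ after $\theta$ and send \emph{timeout} messages. Once the primary collects $f+1$ of them, at least one comes from an honest node, so the abort is justified. The primary assembles the list of successful results and failed indices as a transaction in a new block. By Lemma~\ref{lemma:ordering}, this block is committed consistently, so all honest nodes adopt the same decision about which transactions (and their dependents) are aborted. Each honest node then compares its state with the reported status and fetches any missing results. By the argument of Case 1, those results are unique and verifiable. Every honest node therefore converges to the state given by applying exactly the agreed successful transactions.

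The main obstacle is Case 2. A Byzantine primary could report results inconsistently, or an honest node might already have applied a partial result that is later marked aborted. I would handle this as follows. First, the abort block is the single consensus-ordered source of truth. Second, results are TEE-signed, so they cannot be forged. Third, a transaction whose result some node holds can always be marked successful, and dependents of an aborted transaction cannot have executed anywhere, since the SDG forces them to wait. Together these imply that no honest node holds state contradicting the agreed list. Hence the states eventually coincide.
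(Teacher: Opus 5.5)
Your two-case structure matches the paper's (very brief) proof. In both, SDG ordering preserves the order of dependent transactions while independent transactions commute. On failure, every node synchronizes to the handler's consensus-ordered report.

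The gap is in the extra step you use to dispose of your own Case~2 obstacle. You claim that ``a transaction whose result some node holds can always be marked successful'' and that ``dependents of an aborted transaction cannot have executed anywhere.'' Neither follows from the protocol, and both are false in general. The primary compiles the success/failure list only from its local state and the timeout messages that trigger the abort. The timer $\theta$ cannot distinguish a slow delegator from a faulty one: before \textsf{GST} delays are unbounded, and safety must not depend on timing. So consider an honest delegator that has executed its sub-block. It has sent a partial \emph{StateSyncMsg} to the delegator of a dependent transaction, which then executes that dependent, but its other messages are delayed. Meanwhile $f+1$ of the remaining $2f$ nodes, still lacking the result, time out, and the primary commits a block declaring the transaction and its dependents failed. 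Both the delegator itself and the dependent's delegator now hold state that the committed block declares failed. A Byzantine delegator can force the same outcome deliberately by sending its partial result only to the dependent's delegator. The SDG only makes a dependent wait until its predecessor's result arrives \emph{locally}, not until that result is globally known. Honest nodes can therefore be \emph{ahead} of the agreed list, not just behind it, so your conclusion that no honest node holds contradicting state fails. For the same reason, your case split should be on whether an abort block for $B$ is committed, not on whether a delegator actually withholds.

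What is missing is a reconciliation rule that works in both directions. Once the abort block is committed, every honest node must discard any locally applied effects of the transactions it marks as failed, together with their dependents. One way is to restore its pre-$B$ state and re-apply exactly the successful results listed in that block, in block order. It must also fetch the results it lacks, which is the only direction you (and the paper's description) cover. With that rule, Case~2 follows directly from Lemma~\ref{lemma:ordering} applied to the abort block, together with deterministic application of TEE-signed results. This is what the paper's proof implicitly relies on when it says the handler ``synchronizes'' the execution status across all nodes.

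A smaller point: the paper does not put the SDG under consensus; the primary sends it after $B$ commits. Agreement on the partition, the edges and $v'$ therefore cannot come from Lemma~\ref{lemma:ordering}. It should be justified by the SDG being a deterministic function of $B$ (read/write sets are assumed known in advance), computed inside the primary's TEE.
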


\textsc{Proof.} In \sysname, transactions that complete the consensus may be executed at different nodes. 
Still, the execution order of transactions with dependencies does not change, and the sequence of transactions without dependencies does not affect state consistency.
When Byzantine nodes disrupt the normal execution of transactions, the delegation failure handler collects the execution status of all transactions and synchronizes it across all nodes. 
This ensures that the states of all nodes in the system remain consistent.
To this end, we can get Theorem~\ref{thm:safety}. 

\begin{theorem}[Safety.] \label{thm:safety}
    There are no two different $S$ in \sysname, nor are the versions $v$ of smart contracts.
\end{theorem}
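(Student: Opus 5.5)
The plan is to combine Lemma~\ref{lemma:ordering} and Lemma~\ref{lemma:state consistency} with the determinism of execution inside TEEs. Argue by contradiction: suppose two honest nodes $p_i$ and $p_j$ hold different global states $S \neq S'$, or different versions $v \neq v'$ of some smart contract, after processing the same committed prefix of blocks. The first step uses Lemma~\ref{lemma:ordering}. Both nodes receive the same totally ordered sequence of blocks. Because the consensus runs inside TEEs without touching untrusted storage, committed transactions cannot be replayed or rolled back. So any divergence must come from the execution layer, not from ordering.

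The second step is an induction over the transactions of a block, following the SDG. It relies on three facts.
\begin{itemize}
\item The SDG is generated deterministically by the primary and agreed on together with the block. Every node therefore knows the same dependency edges and the same delegator assignment for each sub-block.
\item A transaction with no dependency commutes with the others. By the definition of state dependency, disjoint read/write sets give the same result in either order. Hence concurrent execution of independent sub-blocks yields the same state as serial execution in the block order $T(\cdot)$.
\item A dependent transaction is executed only after the results of its predecessors arrive via \textit{StateSyncMsg}. Algorithm~\ref{algorithm} applies a partial-synchronization message only when its version is exactly the local version plus one, and a full-synchronization message only if its version is newer. The local version sequence thus advances one step per write, in the SDG-prescribed order.
\end{itemize}
Since TEEs guarantee execution integrity and messages are signed and hashed (the $h$ field), a Byzantine host cannot forge $S$ or $v$. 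The induction hypothesis is that after the $k$-th write to a contract, every honest node holding version $k$ holds the same contract state. The inductive step follows because the $(k+1)$-th state is computed deterministically in one TEE and propagated unchanged.

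The third step handles the case where a delegator fails. Here I invoke Lemma~\ref{lemma:state consistency} and the delegation failure handler. The primary's summary of successful results and aborted indices is itself placed in a block ordered by consensus, so by Lemma~\ref{lemma:ordering} all honest nodes agree on which transactions were aborted. Lagging nodes then fetch the missing results, and every node converges to the same $S$. The expected version is recomputed from the agreed set of successfully executed writes, so $v$ also coincides.

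I expect the main obstacle to be this failure case interacting with partial synchronization. A node may already have applied some results from a sub-block whose later transaction was aborted. Transactions that depend on an aborted transaction must also be aborted consistently, which requires that "dependent" be determined from the agreed SDG rather than from local observations. I would first try to argue this by closing the aborted set under SDG reachability. I would then show that every honest node rolls forward to the same post-abort state, because the committed summary block fixes exactly the set of applied writes and their versions.
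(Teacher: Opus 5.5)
Your decomposition matches the paper's. Lemma~\ref{lemma:ordering} makes the ordered sequence $S$ unique, and SDG-guided concurrent execution with state synchronization is supposed to make contract states and versions agree; the paper's proof consists of just these two assertions. The detail you add, however, exposes a step that does not go through.

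From Algorithm~\ref{algorithm} you conclude that a contract's version ``advances one step per write, in the SDG-prescribed order.'' Your inductive step then needs the $(k+1)$-th state to be computed by a single TEE from the version-$k$ state. But the SDG does not totally order the writes to one contract. The paper explicitly lets transactions touching disjoint portions of a contract's state run concurrently, and a trust domain $\mathcal{D}$ may contain several delegators picked by different clients.

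Take $tx_a$ and $tx_b$ writing disjoint items of the same contract, delegated to $p_1$ and $p_2$. Both TEEs move the contract from $v$ to $v+1$, with different resulting states. Under the rule $v>v'$ in Algorithm~\ref{algorithm}, each node keeps the first version-$(v+1)$ state it obtains (its own, for $p_1$ and $p_2$) and discards the other; a partial message would instead be re-queued forever. So $p_1$ and $p_2$ end at the same version with different states, contradicting your induction hypothesis.

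A total order on each contract's writes must come from elsewhere. One option is the paper's remark that each transaction executes against a specific, block-order-determined version checked before execution, which gives up cross-delegator parallelism inside a contract. The other is to switch to per-item versions merged item-wise instead of whole-state overwrite. Also, the SDG is not agreed on by consensus; the primary sends it afterwards. Your first bullet should therefore rest on the SDG being a deterministic function of the committed block computed inside a TEE.

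Your resolution of the failure case, which you rightly flag as the crux, covers only one direction. The primary's summary is built from its own view plus $f+1$ timeout messages. With $n=2f+1$, all of these can come from nodes that have not yet received the result of some $tx$; a slow network before \textsf{GST} suffices, or a Byzantine delegator's host delivering selectively. The committed summary then declares $tx$ aborted while some honest node (in the slow-network case, the delegator itself) has already applied it.

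Such nodes are ahead of the agreed status, not lagging. ``Rolling forward'' by fetching missing results cannot bring them to the common state, and the monotone version rule of Algorithm~\ref{algorithm} even prevents them from moving back. Closing the aborted set under SDG reachability determines which writes must disappear, but you still need a mechanism that removes them. For example, synchronized results could stay revocable (with pre-images kept) until the block's execution status is fixed by consensus. Alternatively, you would need an argument that the summary always contains every result some honest node has applied. Invoking Lemma~\ref{lemma:state consistency} does not supply this, because the handler it rests on only lets lagging nodes request missing results.
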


\textsc{Proof.} The consensus layer ensures the deterministic order of transactions, \ie, only one $S$ exists in \sysname.
For the execution layer, location-aware concurrent execution leverages SDG to handle the conflict during concurrent execution.
The state consistency of smart contracts can be guaranteed.

\subsection{Liveness Analysis}
The guarantee of liveness requires the joint support of the consensus layer and execution layer.
Keeping the liveness of the consensus layer is a prerequisite for the liveness of the execution layer.
We leverage the confidential consensus protocol to ensure the liveness of the consensus layer.
So we have Lemma~\ref{lemma:liveness}.

\begin{lemma}[Consensus Liveness~\cite{wang2022engraft, keyRecovery}.]\label{lemma:liveness}
    The consensus layer can finish consensus for received blocks when no more than $f$ transactions are faulty simultaneously.
\end{lemma}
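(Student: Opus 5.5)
Although the statement says ``transactions'', we read it as ``no more than $f$ \emph{nodes} are faulty simultaneously'', which matches the threat model of \secref{subsec:threat}. The plan is to reduce the lemma to the liveness of the underlying TEE-based VR protocol of~\cite{wang2022engraft, keyRecovery}, run with $n=2f+1$ nodes under the partial synchrony model of \secref{sec:system models}. We then check that the adaptations \sysname makes to it do not invalidate the assumptions that liveness argument needs. Concretely, we argue that after \textsf{GST} every block proposed by an honest primary is committed within bounded time, and that a faulty primary is replaced within bounded time.

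\textbf{Step 1: quorums.} Since each replica's protocol logic runs inside a TEE, a Byzantine host cannot make its replica equivocate or forge votes. The host can only delay, drop, or crash it. The adversary therefore behaves like a crash/omission adversary. A quorum of $f+1$ replicas suffices, and any two quorums intersect in at least one replica. With at most $f$ faulty nodes, the $f+1$ honest nodes always form a quorum that can make progress without any faulty replica.

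\textbf{Step 2: normal case and view change.} In the normal case, an honest primary's \emph{Prepare}-style messages reach all honest replicas within $\Delta$ after \textsf{GST}. The $f+1$ honest acknowledgments then arrive within another $\Delta$, so the block commits. If the primary is faulty or silent, honest replicas time out and start a view change, with timeouts doubling as is standard. Primaries rotate, so within at most $f+1$ consecutive views after \textsf{GST} an honest primary is installed. Once the timeout exceeds the round-trip bound, this primary keeps its view long enough to commit pending blocks.

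\textbf{Step 3: the obstacle, rollback.} The main obstacle is that our threat model allows TEE rollback, and a restarted replica with a rolled-back state could in principle break quorum intersection or stall view changes. We would argue, as stated before Lemma~\ref{lemma:state consistency}, that consensus state is held only in TEE memory and never persisted to untrusted disk, so rollback is impossible by construction. A rebooted replica instead rejoins with a fresh identity through the recovery procedure of~\cite{keyRecovery}, which obtains state from an $f+1$ quorum. We would count such a replica as faulty until recovery completes. This keeps the number of simultaneously faulty replicas at most $f$, and Steps 1--2 then apply, which yields the lemma.
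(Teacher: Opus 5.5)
Your top-level move, reducing to the liveness of the TEE-hosted VR protocol of \cite{wang2022engraft, keyRecovery}, is exactly what the paper does. It states the lemma with that citation and gives no argument of its own. The problem is your self-contained re-derivation in Steps~1--2, which does not go through.

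Step~1 models a Byzantine host as a crash/omission adversary. Step~2 then uses a liveness argument that is valid only for crash-stop faults: that a faulty primary is ``silent'', so honest replicas time out. In VR, a backup's view-change timer is reset by \emph{any} message from the primary, including the COMMIT heartbeats it sends when idle. The host of a faulty primary can forward those heartbeats while dropping everything that makes progress: incoming client transactions, outgoing PREPAREs, or the returning PREPAREOKs. That primary is faulty but never silent. No honest timer fires, no view change happens, and a block that backups have already received is never committed. So your argument fails even for the lemma's literal claim about received blocks, not just for \sysname's per-transaction liveness.

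A dual problem breaks your claim that an honest primary ``keeps its view long enough.'' The host controls what its TEE receives and, absent a trusted time source, how fast its clock runs. A faulty backup can therefore make its replica time out right after each new view is installed. VR lets a replica join any \textsf{StartViewChange} for a higher view, so that single replica can depose every honest primary before it commits anything. Doubling timeouts does not help, because the faulty replica's timer is not tied to real time.

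Closing the gap needs two mechanisms that vanilla VR lacks:
\begin{itemize}
\item progress-based, per-request timers at backups, which are feasible here because \sysname clients send each transaction to all nodes;
\item a view-change rule requiring $f+1$ suspicions, so the $f$ faulty replicas alone cannot depose an honest primary.
\end{itemize}
Either assert these as properties of the tailored protocol and invoke the cited works' liveness theorem directly, as the paper does, or build them into your argument.

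Separately, your claim that a TEE replica cannot equivocate ignores the cloning attack that the paper's Discussion leaves open. That affects your quorum-intersection remark, though not liveness.
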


When the consensus layer maintains the liveness property, the execution layer receives transactions in the same order.  
In the execution layer, \sysname uses the delegated execution mechanism in which we have Observation~\ref{ob:delegated-exec}.

\begin{ob}\label{ob:delegated-exec}
    For a block generated by the consensus layer, each node may only be responsible for executing a sub-block and not have access to execute all transactions in the block.
\end{ob}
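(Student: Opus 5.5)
The observation follows directly from the trust isolation mechanism and the sub-block construction of \secref{sec: execution policy}, so I would argue it from the definitions rather than from any lemma. Take a block $B=\{tx_1,\dots,tx_n\}$ output by the consensus layer; Lemma~\ref{lemma:ordering} fixes its order identically at all nodes. Under the delegation scheme each $tx_k$ carries a delegator $d(tx_k)$, which the client picks inside the trust domain $\mathcal{D}$ declared when the invoked contract was deployed. The concurrency controller partitions $B$ into sub-blocks $SB_i=\{tx_k : d(tx_k)=p_i\}$. Since every transaction has exactly one delegator, these sets are disjoint and their union is $B$. By Principle 1 of the concurrency controller (``transactions must be executed by the specified delegator''), node $p_i$ executes only the transactions in $SB_i$. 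For every other transaction, $p_i$ obtains the resulting state through \textit{StateSyncMsg} (Algorithm~\ref{algorithm}) rather than by executing it. This establishes the first half: each node executes at most its own sub-block.

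For the second half, that a node may lack access to execute all of $B$, I would exhibit an admissible configuration. Take the example of Table~\ref{tab:example info}: the three sub-blocks $SB_1,SB_2,SB_3$ are assigned to three distinct delegators, so no delegator's sub-block equals $B$. I would then argue that this restriction is enforced, not merely a convention. A node outside the trust domain of a contract is not authorized to run that contract's transactions. Its state is updated only by overwriting with synchronized results, and the update is accepted only when the version $v$ matches. Cross-contract calls are likewise routed to a delegator in the external contract's domain rather than executed locally. So a node receives results for transactions outside its sub-block but never computes them.

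The main obstacle is making ``not have access'' precise, because every TEE holds the shared key $sk_0$ and could technically decrypt any transaction. I would address this by treating access as protocol-level authorization: the TEE code enforces the trust isolation check and refuses to execute a transaction whose delegator is not itself. Remote attestation guarantees that this code is what runs, so even a Byzantine host cannot make its TEE execute a foreign sub-block. Under the stated TEE assumptions this closes the argument. The observation is then ready for the liveness analysis, where blocked sub-blocks are handled by the delegation failure handler.
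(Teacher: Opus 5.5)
Your argument is correct and follows the paper's own reasoning: the paper gives no separate proof and reads the observation directly off the delegated-execution design (sub-blocks partitioned by delegator, Principle~1, and state obtained through \textit{StateSyncMsg} rather than re-execution). One caution: trust isolation routes cross-contract calls to the external contract's delegator (as in Table~\ref{tab:example info}, where $S_1$'s transactions call into $S_2$), so a node can compute fragments of transactions outside its own sub-block, and your claim that it ``never computes them'' should be weakened to ``never executes them in full''; likewise, your TEE-enforcement paragraph adds a design assumption the paper never states and that this ``may'' statement does not need.
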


According to Observation~\ref{ob:delegated-exec}, there are two kinds of states in the execution layer:

\begin{packeditemize}
    \item $s_1$: Transaction execution at node $p_1$ can be halted due to depending on the result of transaction execution at node $p_2$.

    \item $s_2$: Inconsistent state at each node after all transactions in the block have been executed.
\end{packeditemize}

In $s_1$, the state synchronization mechanism is triggered after $p_2$ executes the transaction.
Then $p_2$ synchronizes the execution result to $p_1$ through the network to ensure the normal execution of the transaction at $p_1$. 
In the state $s_2$, each node triggers the state synchronization mechanism after executing all transactions in the sub-block.
The network environment and other reasons may lead to confusion in the order of synchronization state.
To avoid the latest state being covered by the old state, we leverage the version of smart contracts to ensure the correctness of the overwritten state.

When nodes are crashed or aborted maliciously, we have Observation~\ref{ob:dos}.
\begin{ob}\label{ob:dos}
    Assuming that transaction $tx$ is specified to be executed at node $p_1$ but $p_1$ suffers from an unexpected failure and can not restart in a short time, it can not execute $tx$ successfully.
\end{ob}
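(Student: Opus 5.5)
Observation~\ref{ob:dos} is essentially a direct consequence of the trust isolation mechanism, so I would argue it from the definition of delegated execution rather than from any quantitative bound. The plan has three steps: pin down who is allowed to run $tx$, show that nobody can take over, and then conclude.

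\textbf{Step 1: only $p_1$ may execute $tx$.} By the trust isolation mechanism, the client designates $p_1$ as the delegator $d_i$ for $tx$, inside the trust domain $\mathcal{D}$ of the invoked contract. The concurrency controller then places $tx$ in the sub-block $SB_i$ of $p_1$. Principle 1 of location-aware concurrent execution (``transactions must be executed by the specified delegator'') gives that no other node's concurrency controller schedules $tx$. Observation~\ref{ob:delegated-exec} supplies the complementary fact: the other nodes never execute the transactions of $SB_i$ and only wait for a \textit{StateSyncMsg} carrying their results.

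\textbf{Step 2: no other node can substitute for $p_1$.} Here I would argue two things.
\begin{itemize}
\item Other nodes cannot obtain the result by other means. Algorithm~\ref{algorithm} only overwrites state on receipt of a signed \textit{StateSyncMsg} $\langle \mathcal{I}_B, \mathcal{I}_{SB}, r, S, v, h\rangle_{\sigma_1}$ from the delegator. By the secure cryptographic code assumption, no other party can forge $\sigma_1$.
\item Re-executing $tx$ elsewhere is not permitted by the protocol, since that would expose $tx$ to nodes outside $\mathcal{D}$ and defeat the purpose of delegation.
\end{itemize}

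\textbf{Step 3: conclude from the failure.} Since $p_1$ fails and does not restart within the relevant horizon, in particular before the timeout threshold $\theta$ elapses, it never produces or sends the execution result of $tx$. Hence $tx$ is not executed successfully. In particular the expected version check $v \neq v'$ fails at every honest node, which is exactly the trigger for the delegation failure handler described later.

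\textbf{Main obstacle.} The delicate point is the informal phrase ``cannot restart in a short time''. I would make it precise as ``does not recover and send its \textit{StateSyncMsg} before the timers of honest nodes exceed $\theta$ (after \textsf{GST})''. I would also have to rule out that a restarted $p_1$ replays a stale state. Lemma~\ref{lemma:ordering} together with the order-then-execute architecture handles this: the committed order is fixed, so a late restart cannot produce a conflicting successful execution within the current block.
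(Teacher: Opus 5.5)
Your argument is correct and matches the paper's reasoning. The paper gives no separate proof; it treats the observation as immediate from the delegation model, in which $tx$ is bound to the single delegator $p_1$ (Principle 1 and Observation~\ref{ob:delegated-exec}) and the failure handler aborts $tx$ rather than re-delegating it, so your Steps 1 and 3 already suffice. Your extra claims are unnecessary and slightly off, so you can drop them: the version check would not flag a failed read-only $tx$, since it bumps no version, and Lemma~\ref{lemma:ordering} concerns ordering, not whether a late \textit{StateSyncMsg} from a restarted $p_1$ is rejected (Algorithm~\ref{algorithm} never consults the abort decision).
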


Nodes trigger the delegation failure handler mechanism to handle the aborted transactions.
So, transactions that fail in execution do not keep blocking the execution of subsequent transactions.
The failure handler mechanism will abandon these transactions and commit failed state of them to consensus layer.
So we have Lemma~\ref{lemma:executionliveness}.

\begin{lemma}[Execution Liveness.]\label{lemma:executionliveness}
    \sysname can process transactions and make progress despite failures or adversarial interference.
\end{lemma}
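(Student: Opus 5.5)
We prove Lemma~\ref{lemma:executionliveness} by tracking an arbitrary committed block $B$ after \textsf{GST} and showing that within bounded time every transaction of $B$ is either executed with its result synchronized to all honest nodes, or aborted by a decision agreed through consensus. Consensus keeps producing blocks by Lemma~\ref{lemma:liveness}, so an induction over the sequence of committed blocks then gives progress for the whole system.

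\textbf{Honest dependency chains.} Fix $B$ and its SDG, which is a DAG because edges follow the order $T(\cdot)$. Induct along a topological order of the SDG, restricted to transactions whose delegator $d_i$ is honest and whose SDG predecessors have all completed. Such a transaction is executed by $d_i$. Through partial synchronization, its \textit{StateSyncMsg} reaches each dependent delegator within $\Delta$. Algorithm~\ref{algorithm} applies the message once the version condition $v=v'+1$ holds. Messages that arrive too early are re-queued rather than dropped, so they are applied once their predecessor versions arrive. This settles state $s_1$. Full synchronization together with the monotone version rule (messages with $r=\mathit{true}$ are applied only if $v>v'$) then settles state $s_2$.

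\textbf{Faulty delegators and the abort path.} Now let some transaction have a Byzantine or crashed delegator, as in Observation~\ref{ob:dos}. Its version increment, and those of all its SDG descendants, never arrive. Every honest node therefore sees $v\neq v'$ when its timer $\theta$ fires and sends a \textit{timeout} message. There are at least $f+1$ honest nodes among $n=2f+1$, so the primary receives $f+1$ timeouts. If the primary is honest, it packages the successful results and the failed indices into a new block. By Lemma~\ref{lemma:liveness} that block commits, and honest nodes then fetch any results they are missing and discard the aborted transactions.

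\textbf{Main obstacle.} The hardest step is a Byzantine primary that ignores the timeouts or reports a wrong summary. I would argue this as follows: a stalled summary is treated like any stalled proposal, so it triggers the view change of the underlying VR/BFT protocol. After \textsf{GST} an honest primary is eventually installed, which is the same argument that gives Lemma~\ref{lemma:liveness}. A fabricated summary is caught by the verification step in which nodes "check if their state aligns", and the results it carries are authenticated by the signed hashes $h$ from attested TEEs. I also need $\theta$ to exceed the honest execution and synchronization delay, a bounded multiple of $\Delta$ times the SDG depth. Otherwise honest transactions could be aborted spuriously; that would not break liveness, but it must be bounded so that honest clients' transactions eventually succeed.
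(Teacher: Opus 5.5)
Your outline is the paper's own argument (lagging nodes send \textit{timeout} messages, the primary gathers $f+1$ of them, packages successful results and failed indices into a block that commits by consensus liveness, and execution resumes), spelled out in more detail than the paper's short proof. The step that does not go through as written is ``every honest node therefore sees $v\neq v'$''. That holds for a crashed delegator, but you explicitly include Byzantine ones, and a Byzantine host controls its node's outgoing traffic. Such a host can deliver the TEE-signed \textit{StateSyncMsg} of an SDG-leaf transaction $tx$ to a single honest node $q$ and to nobody else; pick $q$ to be a node that delegates nothing in $B$ and so does no full synchronization. Then $q$ reaches the expected version and sends no timeout, and the $f$ Byzantine nodes withhold theirs. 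With $n=2f+1$ the primary collects only $f$ timeouts, so the abort never fires. The other $f$ honest nodes stay stale, because the protocol as described pulls missing results from peers only after an abort block has arrived. So your per-block invariant (every transaction of $B$ is either synchronized to all honest nodes or aborted by consensus within bounded time) fails for such a $B$. The threshold $f+1$ equals the number of honest nodes, so it leaves no slack.

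Closing this needs a case split and a rule the paper does not state: a node whose timer fires requests the missing result from all peers. Any copy can be checked through the TEE signature and the hash field $h$. If no honest node holds the result, all $f+1$ honest nodes time out and your abort argument applies. If one does, the lagging honest nodes obtain it within bounded time after \textsf{GST}. Your Byzantine-primary step likewise needs an explicit rule: a node that has sent a timeout and sees no summary block within some bound initiates a view change. VR replaces only a silent primary, and a primary that keeps proposing ordinary blocks while dropping timeouts would never be replaced. The ``fabricated summary'' case, by contrast, is moot. Under the TEE-integrity assumption the primary can withhold a summary but cannot forge one. The paper's proof does not address these points either; they are holes in the shared argument that your more explicit version exposes.
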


\textsc{Proof.} The proof is straightforward.
Nodes send a \textit{timeout} message to the primary node to trigger its optimistic abort mechanism. 
The primary node collects the transaction execution statuses and submits them in a new block for consensus. The execution results of completed transactions and information about aborted transactions are synchronized across all nodes. 
Transactions in the new block will then continue to be executed by delegators.

\begin{theorem}[Liveness.] \label{thm:liveness}
    Any valid transaction submitted by an honest client is eventually included in the blockchain and becomes immutable (finalized), provided the delegator operates under the assumed fault model.
\end{theorem}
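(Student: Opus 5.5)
We would prove Theorem~\ref{thm:liveness} by composing the two layer-wise liveness results, Lemma~\ref{lemma:liveness} for ordering and Lemma~\ref{lemma:executionliveness} for execution, and then arguing finality from the order-then-execute architecture. Fix an honest client $c$ and a valid transaction $tx$ that $c$ encrypts under $pk_0$ and sends to all nodes. Since links between honest nodes are reliable and there are at least $f+1$ honest nodes, some honest node holds $tx$. We take the run to be after \textsf{GST}, where message delays are bounded by $\Delta$, and we assume at most $f$ Byzantine nodes.

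\textbf{Step 1 (inclusion).} We first argue that $tx$ is eventually proposed by an honest primary. The confidential consensus protocol rotates primaries on view change, and at most $f$ of them are Byzantine. So after \textsf{GST} some view has an honest primary that still holds $tx$ in its pool and batches it into a block $B$. By Lemma~\ref{lemma:liveness}, $B$ completes consensus. By Lemma~\ref{lemma:ordering}, $B$ occupies a globally consistent position in the order, and this gives inclusion. Because the VR instance runs inside TEEs without touching external storage, the argument preceding Lemma~\ref{lemma:state consistency} says a committed block cannot be rolled back or replayed. This is what yields immutability.

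\textbf{Step 2 (execution and finalization).} We split into two cases on the delegator $d$ of $tx$.

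\emph{Case 1: $d$ is honest.} In this case $tx$ can stall only while it waits on a dependency in the SDG whose source is a transaction delegated elsewhere. We handle this by induction along the SDG, which is acyclic and has finitely many nodes per block. Each predecessor of $tx$ is either executed, in which case partial synchronization delivers its result within $\Delta$ (state $s_1$), or it belongs to a Byzantine delegator that withholds it. In the second situation, every honest node's version check fails when its timer $\theta$ fires. The $f+1$ honest nodes then send \emph{timeout} messages, and the primary performs the optimistic abort. The aborted dependency and all transactions depending on it are recorded in a new block, which commits again by Lemma~\ref{lemma:liveness}.

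\emph{Case 2: $d$ is Byzantine.} Here $tx$ itself may be aborted, and the theorem's proviso that "the delegator operates under the assumed fault model" excludes this case.

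In both cases Lemma~\ref{lemma:executionliveness} gives progress. Lemma~\ref{lemma:state consistency}, together with the version rule of Algorithm~\ref{algorithm} for state $s_2$, makes the finalized outcome identical at all honest nodes.

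\textbf{Main obstacle.} The hard step is Case 1 when an honest client's $tx$ depends on a transaction held by a Byzantine delegator. Optimistic abort discards both "the delegated transaction and subsequent dependent transactions", so an honest $tx$ could be aborted as collateral damage. We would try to show that the abort block reschedules such collaterally aborted transactions, rather than dropping them, relying on the statement that transactions in the new block "will then continue to be executed by delegators." Then $tx$ reappears with its Byzantine dependency removed, and after finitely many rounds it executes. If the protocol does not support this reading, we would weaken the claim to "included and finalized with a definite (executed or aborted) status."

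A secondary issue is that an abort requires $f+1$ timeout messages, so honest nodes must all time out. This needs $\theta$ to exceed the honest execution time plus $\Delta$ per synchronization hop, which we would state explicitly as an assumption on $\theta$.
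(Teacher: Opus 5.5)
Your proposal is correct and follows essentially the same route as the paper, whose proof of Theorem~\ref{thm:liveness} is just the composition of Lemma~\ref{lemma:liveness} (consensus liveness) with Lemma~\ref{lemma:executionliveness} (execution liveness); your Steps 1 and 2 unpack those two ingredients in more detail. The theorem claims only inclusion and finality, not successful execution, so your fallback reading (finalized with a definite executed-or-aborted status) already suffices, and the collateral-abort obstacle does not need to be resolved for this statement, though it is a fair observation about the optimistic-abort design.
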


\textsc{Proof.} With Lemma~\ref{lemma:liveness} and Lemma~\ref{lemma:executionliveness}, we have the liveness guarantee of \sysname.

\subsection{Confidentiality Analysis}
We first present that \sysname can prevent \ERatt and \EIatt attacks and then prove the confidentiality of \sysname.

\begin{lemma}[\ERatt Prevention.]\label{lemma:execution-reply prevention}
    Each executed transaction cannot be rolled back.
\end{lemma}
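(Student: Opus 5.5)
We argue by contradiction: suppose the adversary $\mathcal{A}$ rolls back some executed transaction $tx$, so that $tx$ can be re-executed with a different outcome, or a different transaction $tx'$ can be executed in its place, and this outcome is observable or accepted. We show each way such a rollback could arise is excluded by one of three properties of \sysname: the order-then-execute architecture, state continuity of the consensus layer, and deterministic execution inside TEEs.

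\textbf{Step 1 (ordering is fixed before execution).} In \sysname a transaction reaches the execution layer only after it has been committed by the consensus layer. By Lemma~\ref{lemma:ordering} every committed transaction receives a unique, globally consistent position. The VR protocol runs entirely inside TEEs and does not touch untrusted storage, so a committed transaction cannot be replayed or reordered at the consensus layer. So $\mathcal{A}$ cannot take an already ordered $tx$ and replace it with a crafted $tx'$ at the same index. By the safety property this would mean two different transactions accepted at one index.

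\textbf{Step 2 (restarting execution yields nothing new).} $\mathcal{A}$ can still kill a delegator's TEE and restart it from a stale state, because we assume TEEs lack state continuity. Executing the committed sequence again from that stale state reproduces exactly the same transactions in the same order. Execution is deterministic (EVM/WASM), so the outputs and state updates are identical. Hence the replay gives no differentiable output, which is the source of leakage in an \eratt attack. \sysname also tracks contract versions: a stale or forked result carries version $v \le v'$, and Algorithm~\ref{algorithm} discards or queues it. Such a result therefore never overwrites honest nodes' state, so it never becomes an accepted rollback. Theorem~\ref{thm:safety} then gives uniqueness of $S$ and $v$.

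\textbf{Step 3 (aborted transactions).} We also handle transactions aborted by the delegation failure handler. Their failed status is itself committed in a new block through consensus, so Step 1 applies: the abort is final, and the transaction cannot be silently revived and executed against a rolled-back state.

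The main obstacle is Step 2. It is not really that execution cannot be rolled back locally, since it can; the point is that a local rollback is harmless. We must argue carefully that determinism plus fixed ordering means the adversary's restarted TEE is confined to a single execution trace. This relies on encrypted state and the signed \textit{StateSyncMsg} with hash $h$, which stop $\mathcal{A}$ from injecting a modified state or input. I would state explicitly that a ``rollback'' in the lemma means one that changes the committed outcome, and prove that version.
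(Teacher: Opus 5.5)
Your Step 1 is the paper's proof: because \sysname is order-then-execute, nodes execute only consensus-ordered transactions, and the correctness of the consensus layer (Lemma~\ref{lemma:ordering}) guarantees that an ordered transaction is never rolled back. Your Steps 2 and 3 are extra elaboration that the paper's proof omits: harmless local TEE rollback via deterministic re-execution, version checks, and committed abort status. Only the determinism point appears in the paper, informally in \secref{subsec:rollback}, and none of these steps is needed to match the paper's argument.
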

\textsc{Proof.} \sysname adopts the order-then-execute architecture, where nodes will only execute ordered transactions. The correctness of the consensus layer ensures that an ordered transaction will not be rolled back. Thus, this lemma is proven.

\begin{lemma}[\EIatt Prevention.]\label{lemma:execution-inference prevention}
    The execution of a transaction cannot be observed by nodes that are not delegated by the client.
\end{lemma}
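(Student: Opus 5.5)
The plan is to argue by tracing where a transaction's plaintext and its execution can exist in \sysname, and to show that every such place lies inside the TEE of a node in the trust domain $\mathcal{D}$ chosen by the client. First, before ordering, a transaction is encrypted under $pk_0$. By the secure-cryptographic-code assumption, neither ciphertext analysis nor side channels on the decryption routine reveal anything. During the ordering phase, the confidential consensus protocol of Lemma~\ref{lemma:ordering} handles only ciphertexts and metadata such as the delegator identity and the read/write sets used to build the SDG. No contract code is run on the transaction's inputs there, so there is no control-flow leakage for a non-delegated node to observe.

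Second, I would consider the execution phase and invoke the trust isolation mechanism. After commitment, the block is split into sub-blocks $SB_i$, and a transaction $tx$ is executed only by its designated delegator $d_i\in\mathcal{D}$. Every other node $p_j\notin\mathcal{D}$ receives only a \textit{StateSyncMsg} $\langle \mathcal{I}_B, \mathcal{I}_{SB}, r, S, v, h\rangle$ and applies it through Algorithm~\ref{algorithm}. That algorithm only overwrites state and compares versions, so its control flow depends on $r$ and $v$, not on the transaction's secret inputs. Hence what $p_j$ observes through cache, page-fault or interrupt side channels is a function of public metadata only. Cross-contract calls are handled the same way: the external part runs at a delegator of the callee's domain, which the contract deployer has also designated as trusted.

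Third, I would close off indirect routes to observing execution. Replaying or re-executing $tx$ at a non-delegated node is excluded by Lemma~\ref{lemma:execution-reply prevention}: ordered transactions are final and execution is deterministic. The delegation failure handler is also safe here, because on timeout the transaction is aborted rather than reassigned to another node. The primary only packages results and indices, so a Byzantine node cannot force execution onto itself by causing a DoD.

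The main obstacle is the second step. Synchronized results and their timing (partial synchronization, waiting on dependencies) could themselves act as a channel. I would handle this by stating explicitly that the lemma concerns \emph{execution} observability, so that state results are confidential payloads inside TEEs and are not execution traces. I would also argue that the SDG-driven waiting pattern depends only on read/write sets, which are assumed known and public before execution.
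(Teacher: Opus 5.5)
Your proposal is correct and takes the paper's route. The paper's entire proof is your second step: only delegators execute the transaction, and every other node merely receives the synchronized results. Your other steps are elaboration the paper does not give, but one justification there should be replaced: Lemma~\ref{lemma:execution-reply prevention} only says ordered transactions cannot be rolled back, so it does not stop a non-delegated node from running $tx$ itself. Since every TEE holds $sk_0$, that is ruled out instead because each node's attested TEE code executes only the sub-block assigned to it in the consensus-committed block (Lemma~\ref{lemma:ordering} plus the secure-TEE-hardware assumption).
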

\textsc{Proof.} With delegated execution, only delegators will execute transactions, and other nodes only see the execution results. As such, this Lemma holds.

\begin{theorem}[Confidentiality.] \label{thm:confidentiality}
    An adversary cannot obtain sensitive information through \attack attacks.
\end{theorem}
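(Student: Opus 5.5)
The plan is to argue by case analysis on the two classes of \attack attacks identified earlier, \eiatt and \eratt attacks, and to show that each is blocked by one of the two preceding lemmas. First I will make ``\attack attack'' precise: it is any strategy by which the adversary $\mathcal{A}$, controlling at most $f$ Byzantine nodes, extracts sensitive information from the \emph{execution} of a transaction. Under the threat model of \secref{subsec:threat}, cryptographic primitives and constant-time cryptographic code are secure, so the ciphertext itself yields nothing. The only channels left are two. One is observing the control flow or memory access patterns of an execution in a TEE the adversary hosts. The other is observing outputs across repeated executions obtained through state rollback. Exhausting the attack surface this way is the step I expect to be the main obstacle. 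The whole proof is only as strong as the claim that these two channels are exhaustive, so I will state explicitly that this exhaustiveness is inherited from the threat model rather than proved.

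For the \eiatt case, I will invoke Lemma~\ref{lemma:execution-inference prevention}. Any transaction $tx$ from an honest client is executed only by delegators in the trust domain $\mathcal{D}$ that the client chose. Under the asymmetric-trust assumption, these delegators are trusted by that client, so they are not controlled by $\mathcal{A}$. Non-delegated nodes receive only the signed \textit{StateSyncMsg} results. These are the same information an honest observer of the ledger would receive, and obtaining them involves no execution that side channels could observe. I also need to handle cross-contract calls. Per the trust isolation mechanism, an external call runs at a delegator of the callee's trust domain, so the argument applies recursively per contract. The delegation failure handler only aborts transactions and never re-executes them elsewhere, so it creates no new execution site.

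For the \eratt case, I will invoke Lemma~\ref{lemma:execution-reply prevention} together with Lemma~\ref{lemma:ordering}. Execution happens only after ordering, and ordering is final, so the adversary cannot replace an executed transaction with a chosen alternative. Re-executing from an older state is possible, but execution is deterministic: it reproduces an identical output and reveals nothing new. Crafting a fresh probe transaction is also possible, but it must itself be ordered and committed, at which point it becomes an ordinary, irrevocable transaction. So the adversary cannot perform the iterative probe-and-rollback search from \secref{subsec:rollback}. Combining the two cases, neither attack class yields sensitive information, which establishes the statement.
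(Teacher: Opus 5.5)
Your proposal is correct and takes essentially the paper's route: the paper proves the theorem in one line by combining Lemma~\ref{lemma:execution-reply prevention} and Lemma~\ref{lemma:execution-inference prevention}, one per attack class. Your case split and explicit caveats (the two channels being exhaustive by the threat model, an honest client's delegators not being adversarial, deterministic replay and committed probes revealing nothing new) only make visible what that one-line proof assumes implicitly. The one added step to handle with care is your recursive treatment of cross-contract calls: it assumes the invoking client also trusts the callee's trust domain, whereas in the paper that domain is fixed by the external contract's deployer rather than chosen by the client---a case the paper's proof does not address either.
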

\textsc{Proof.} This theorem holds as a result of Lemma~\ref{lemma:execution-reply prevention} and Lemma~\ref{lemma:execution-inference prevention}.

\section{Evaluation}
We evaluate the performance of \sysname, aiming to answer the following questions:

\begin{packeditemize}
    \item \textbf{Q1:} How does \sysname perform compared to raw BCOS? (\secref{subsec:evaluation})

    \item \textbf{Q2:} How much performance overhead is introduced by SEV-SNP and other security modules of \sysname compared to BCOS? (\secref{subsec:overhead})
\end{packeditemize}

\subsection{Implementation and Setup}
\subsubsection{System Implementation} \label{subsec:implementation}
We build a prototype of \sysname atop FISCO BCOS v3.0.1 implemented in C++. 
We use AMD SEV-SNP as the TEE platform. 
We use bcos-crypto library\footnote{https://github.com/FISCO-BCOS/bcos-crypto} to provide message encryption/decryption and signature mechanisms.
We use bcos-boostssl\footnote{https://github.com/FISCO-BCOS/bcos-boostssl} to build secure and encrypted communication channels between nodes.
Clients interact with \sysname through RPC to send encrypted transactions or receive replies. 
We emulate the process in which the delegator synchronizes the execution state to other nodes using the \texttt{stateSync} functions. 
Specifically, we generate and fill \texttt{stateSync message} with randomly generated data so that its size matches the size of the execution results.
Therefore, this simplification does not affect the evaluation results of performance.

FISCO BCOS uses PBFT~\cite{Castro:2002:PBFT} as its consensus protocol. However, PBFT has $O(n^2)$ message complexity to commit transactions and requires $n = 3f+1$ nodes to tolerate $f$ Byzantine nodes. Prior works~\cite{wang2022engraft, keyRecovery} show that porting CFT protocols within TEEs with tailored design can achieve Byzantine fault tolerance.
The key idea is to use in-memory CFT protocols, which run consensus protocols in memory and do not use external storage to store state information. This can effectively defend against rollback issues of TEEs. 
To this end, we implement VR~\cite{liskov2012viewstamped} on FISCO BCOS. 
The extensions include reducing the consensus phases, message processing logic, the required number of votes, and the view-change mechanism. 
To simplify the system setup, we consider all nodes belonging to the same trust domain and choose the primary node as the delegator for its proposed block. 
After consensus, the block proposed is directly executed by the primary node. 
Then, it synchronizes the execution results of the block to other nodes directly.
This setup aligns with our design architecture and improves system performance compared to other setups.

\bheading{Baselines.}
We compare \sysname to FISCO BCOS to answer Q1.
We also consider three variants of \sysname to quantify the impact of the consensus, execution modules, and SEV-SNP on performance. 
The first variant, called \sysname-CS, runs VR protocol within SEV-SNP without the delegated execution. 
The second one, called \sysname-ES, allows nodes to be equipped with the delegated execution module and SEV-SNP, but not the VR protocol, called \sysname-ES.
The third one is BCOS-S, which directly runs BCOS within SEV-SNP to test the performance overhead of SEV-SNP.
The consensus protocol running in BCOS is PBFT.
By comparing \sysname with these variants, we aim to answer Q2.

\begin{figure*}[htbp]
\centering
\subfigure[\sysname]
{
    \begin{minipage}[b]{.23\linewidth}
        \centering
        \includegraphics[scale=0.55]{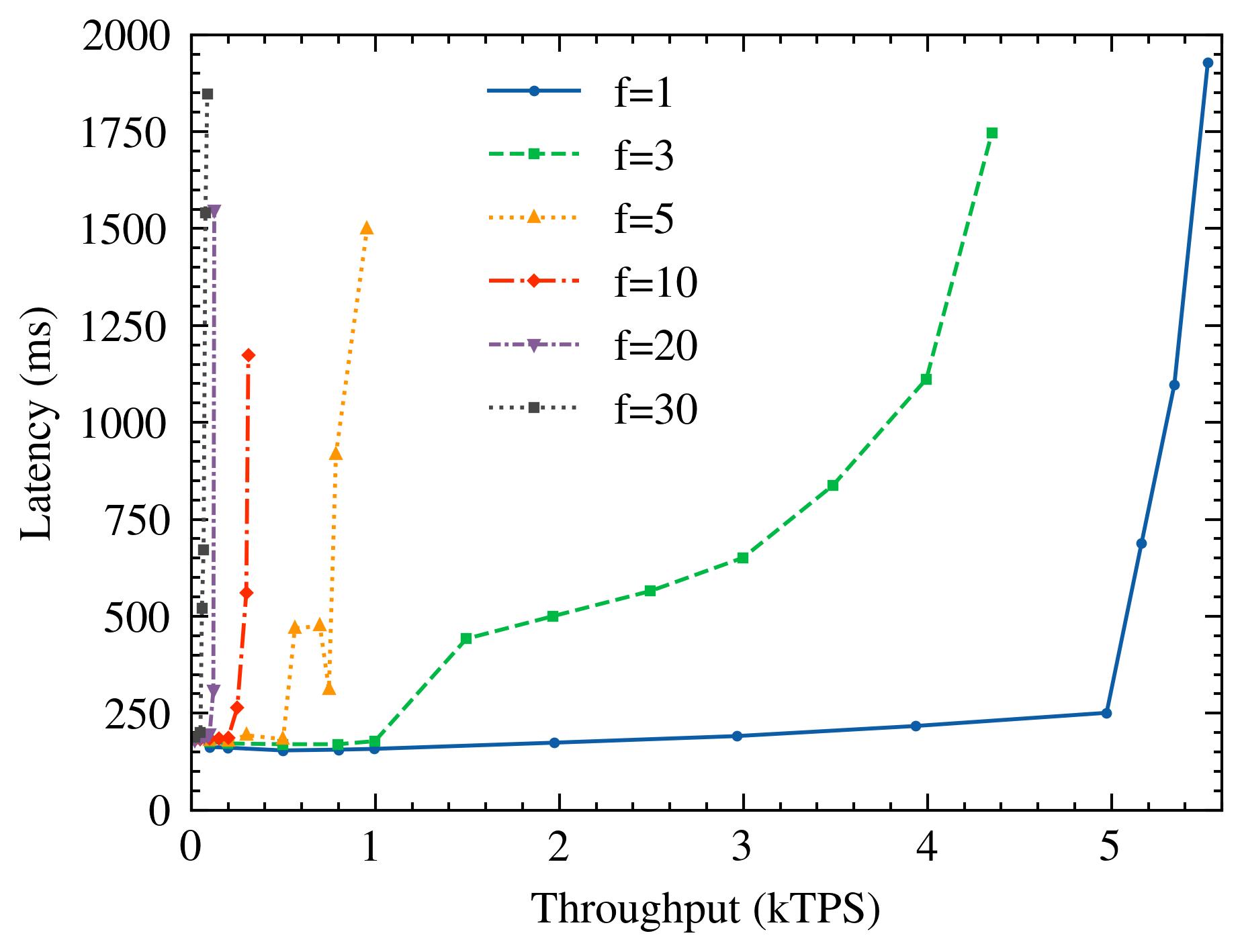}
        \label{CB_lan}
    \end{minipage}
}
\subfigure[FISCO BCOS]
{
 	\begin{minipage}[b]{.23\linewidth}
        \centering
        \includegraphics[scale=0.55]{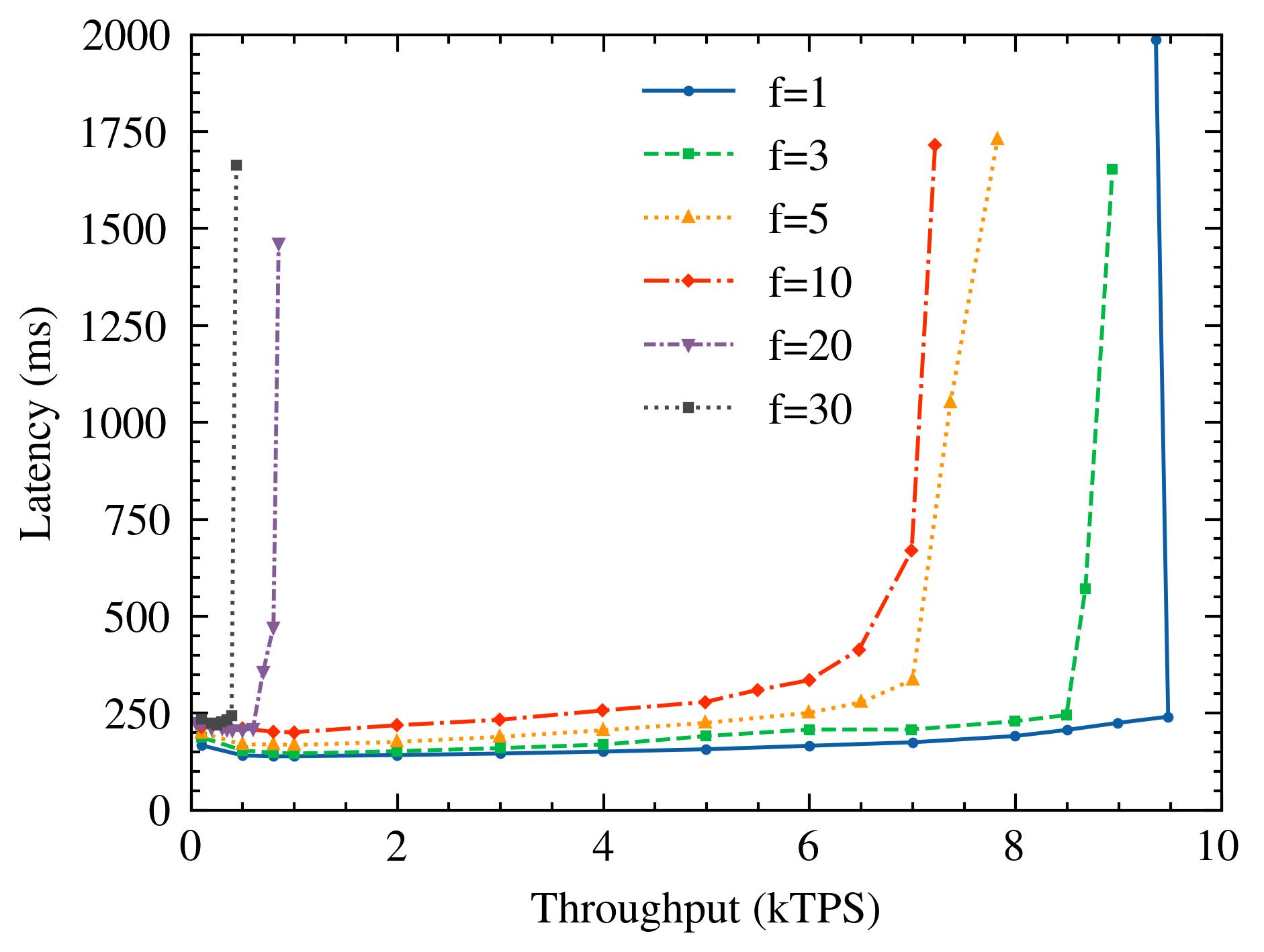}
        \label{bcos_lan}
    \end{minipage}
}
\subfigure[Maximum TPS vs. faults]
{
 	\begin{minipage}[b]{.23\linewidth}
        \centering
        \includegraphics[scale=0.55]{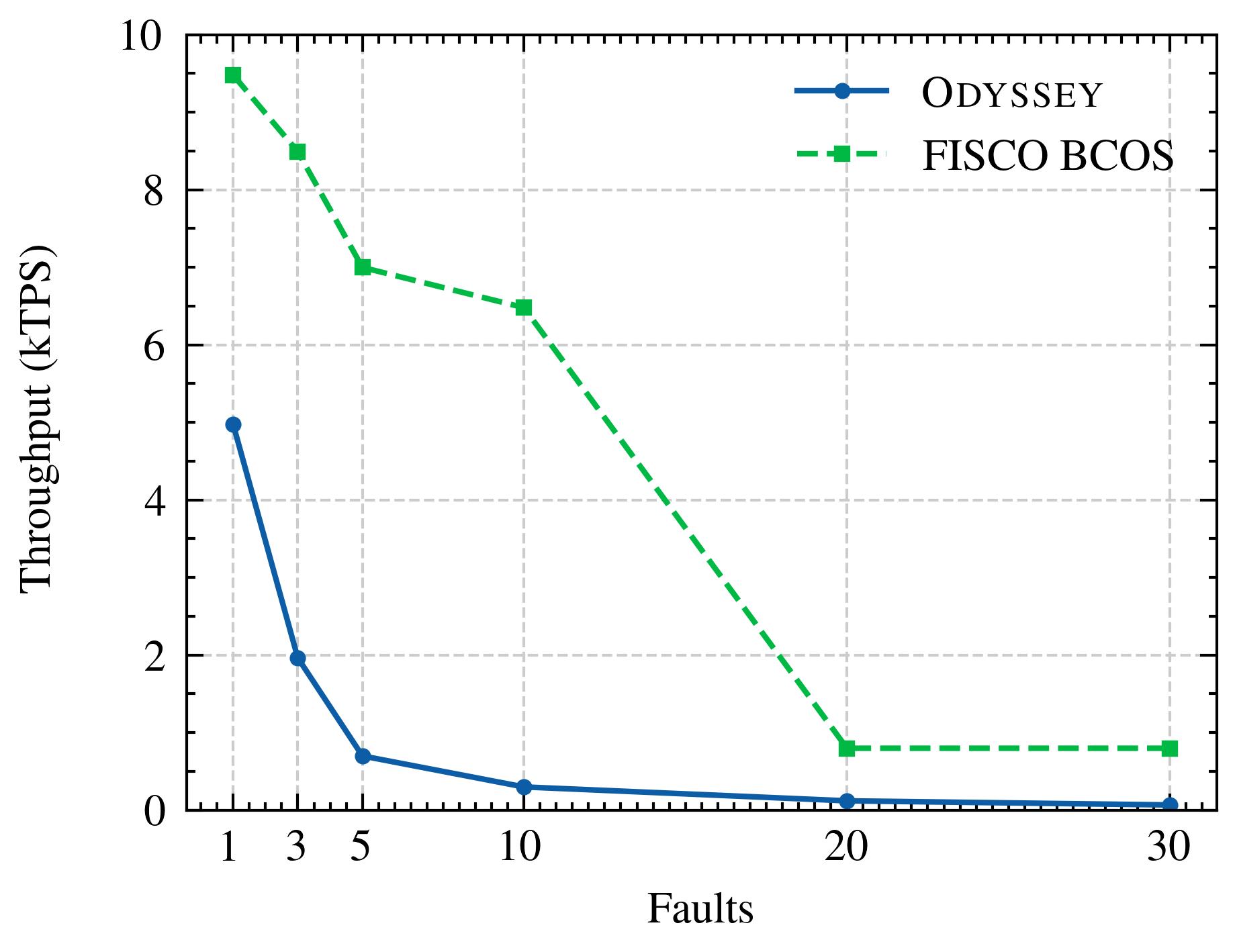}
        \label{lan_max_tps}
    \end{minipage}
}
\subfigure[Maximum latency vs. faults]
{
 	\begin{minipage}[b]{.23\linewidth}
        \centering
        \includegraphics[scale=0.55]{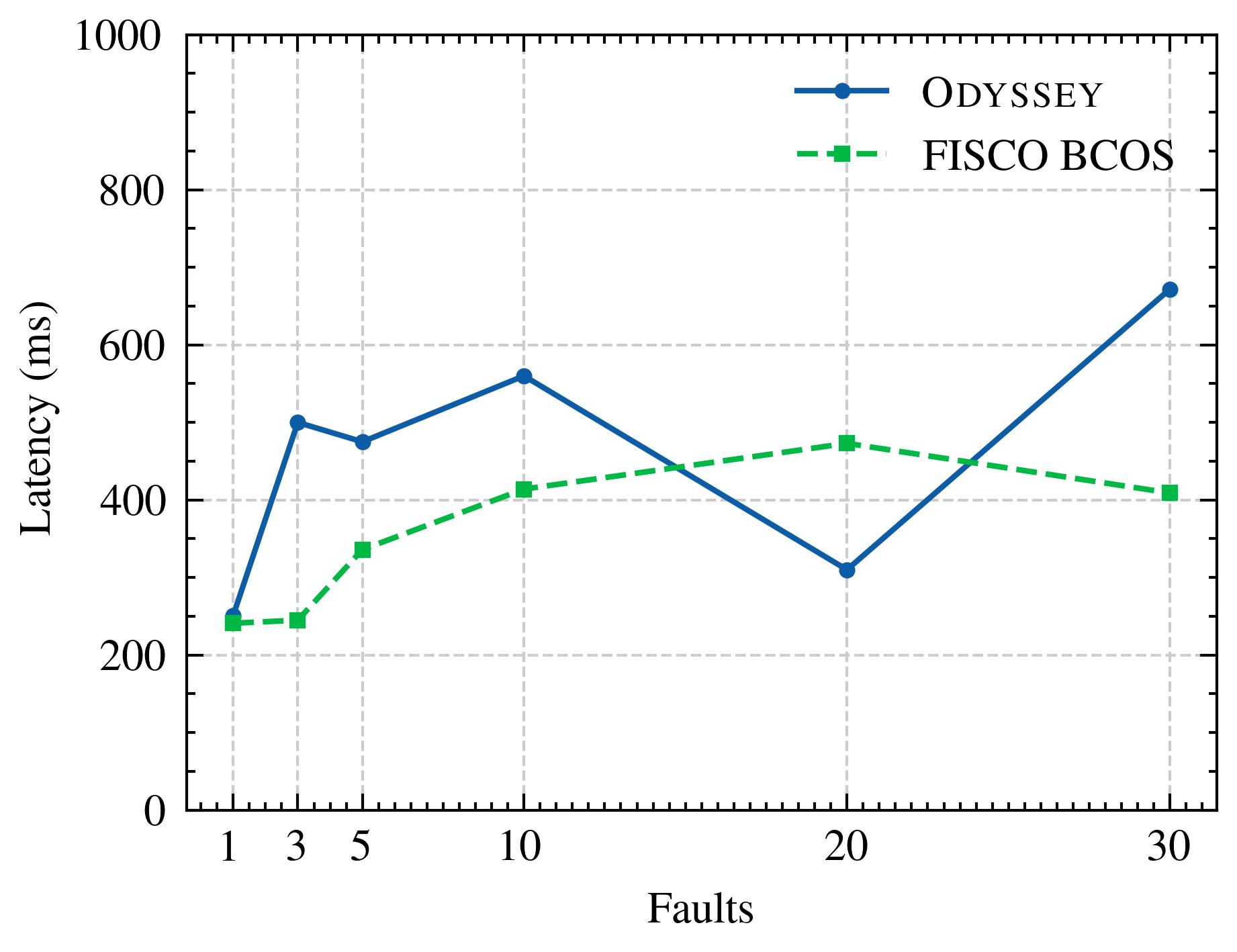}
        \label{lan_max_latency}
    \end{minipage}
}
\caption{Latency vs. throughput in LAN.}
\label{lan test}
\end{figure*} 

\subsubsection{Experiment Setup}
We deploy \sysname and baselines on AWS EC2 machines with one c6a.xlarge instance per node.
All nodes run on dedicated virtual machines with 4 vCPUs and 8GB RAM running Ubuntu Linux 22.04.
We enable SEV-SNP memory encryption for virtual machines.
We send transactions using clients at a specified speed to control the workloads. 
On the client side, we achieve the target transaction sending speed, which is measured in $tx/s$ by controlling the rate at which RPC requests are sent to nodes every second (each RPC request includes one transaction). 

We conduct experiments in both LAN and WAN settings. 
In the LAN setting, we use cloud VM instances in the same availability zone located in the State of Ohio, America, where the latency between instances is 0.2 - 0.5ms.
We use iproute-rc\footnote{https://www.mankier.com/package/iproute-tc} to add a $40 \pm 5$ms latency to the ports of each instance to simulate a WAN environment. 
The round-trip time (RTT) between any two nodes is about 80ms, and 40ms for clients and nodes.
We measure throughput, which denotes the number of transactions the system handles per second (TPS), and latency, which represents the time that each transaction consumes clients send to and get replies from nodes.

\bheading{Benchmark.}
We deploy a transfer scenario including cross-contract parallel transfers for testing \sysname and baseline systems in the cloud environment.
We follow the majority settings of BCOS benchmark.
The benchmark begins by launching 32 accounts with sufficient balances.
Then, we use a client to send a certain number of transactions, calling the smart contract to transfer between different accounts. 
We vary the fault threshold $f \in \{1, 2, 5, 10, 20, 30\}$.
The total number of nodes in \sysname is $2f+1$ and $3f+1$ for BCOS.
Each block has no more than 2000 transactions. 
Furthermore, we limit the minimum seal time of the block package to 50ms.
With this setting, we get better performance for BCOS.
We keep this setting in \sysname for performance comparisons with BCOS.

\subsection{Performance Evaluation} \label{subsec:evaluation}

\subsubsection{LAN Evaluation}
We evaluate \sysname within clusters including 1-, 7-, 11-, 21-, 41-, and 61-nodes in the LAN settings. 
We use clients to deploy and invoke the smart contracts by sending transactions.
We test the throughput of \sysname and BCOS by controlling the speed of sending transactions. 

\bheading{Latency vs. throughput.} 
\figref{lan test} shows the evaluation results. When $f=1$, the maximum throughput of BCOS is about $1 \times$ higher than that of \sysname: \sysname achieves 4973 TPS (with a latency of 251ms) while BCOS can reach 9479 TPS (with a latency of 241ms). 
With $f$ equal to other values, the maximum throughput of BCOS is still higher than \sysname.
\figref{lan test} demonstrates that under different values of $f$, when the throughput of the system reaches a certain value, the latency rises rapidly, \ie, the inflection point in the curve, at which the current throughput can be considered to be the maximum throughput of the system.
\figref{CB_lan} shows that the peak throughput of \sysname does not differ much when $f=1$.
But when $f=2$ or even larger, the peak throughput of \sysname decreases rapidly. 
This is because the state synchronization of \sysname introduces additional network resource usage.
\figref{bcos_lan} shows that the peak throughput of BCOS does not differ much when $f=1, 3, 5$ and $10$, but drops rapidly when $f=20$ and $30$.
By comparing Fig.~\ref{CB_lan} and Fig.~\ref{bcos_lan}, we observe that the latency of \sysname increases faster than that of BCOS when the throughput increases. 

\bheading{Performance with varying number of faults.}
\figref{lan_max_tps} shows that both \sysname and BCOS exhibit obvious performance degradation when the cluster scales up in the LAN. 
For BCOS, the maximum throughput achieved in the 91-node cluster is $4.12\%$ of that in the 4-node cluster. 
As for \sysname, this ratio is $1.35\%$. 
We can see that \sysname achieves poor scaling performance compared to BCOS. 
\figref{lan_max_latency} shows that as the $f$ increases, the latency of \sysname and BCOS generally exhibits an upward trend.
In most cases, the latency of \sysname is higher than that of BCOS.

\subsubsection{WAN Evaluation}

\begin{figure*}[htb]
\centering
\subfigure[\sysname]
{
    \begin{minipage}[c]{.23\textwidth}
        \centering
        \includegraphics[scale=0.55]{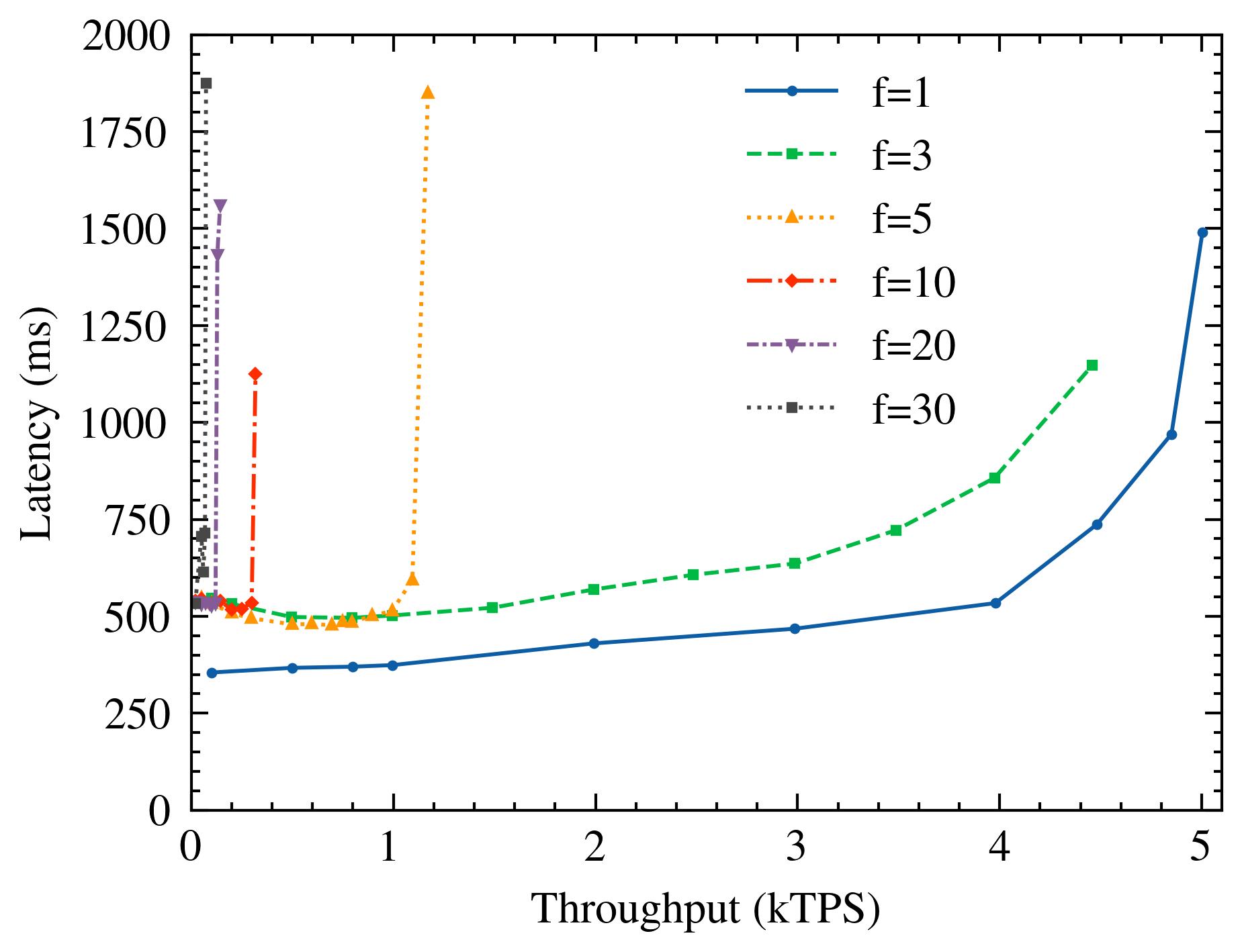}
        \label{CB_wan}
    \end{minipage}
}
\subfigure[FISCO BCOS]
{
 	\begin{minipage}[c]{.23\textwidth}
        \centering
        \includegraphics[scale=0.55]{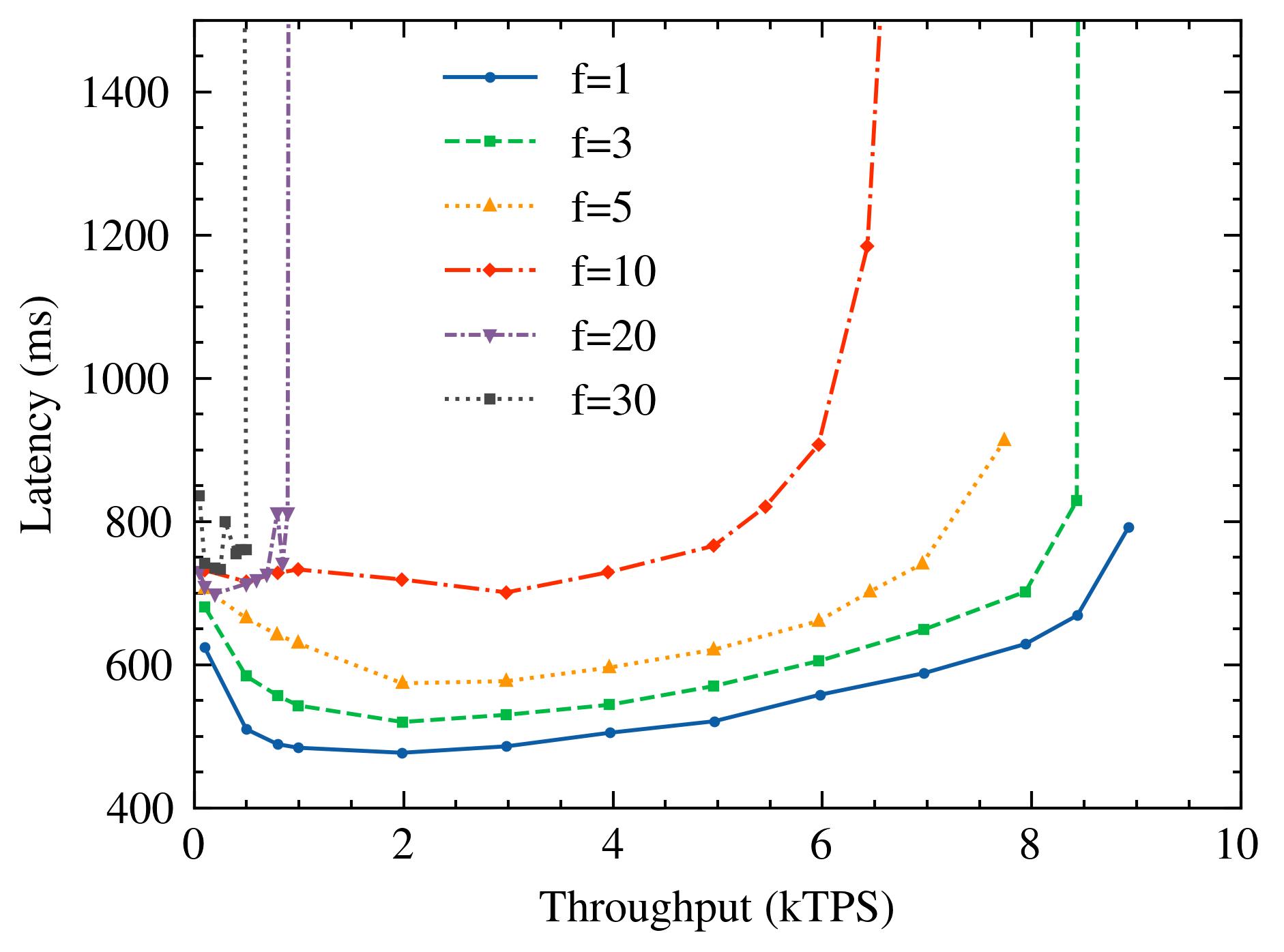}
        \label{bcos_wan}
    \end{minipage}
}
\subfigure[Maximum TPS vs. faults]
{
 	\begin{minipage}[c]{.23\textwidth}
        \centering
        \includegraphics[scale=0.55]{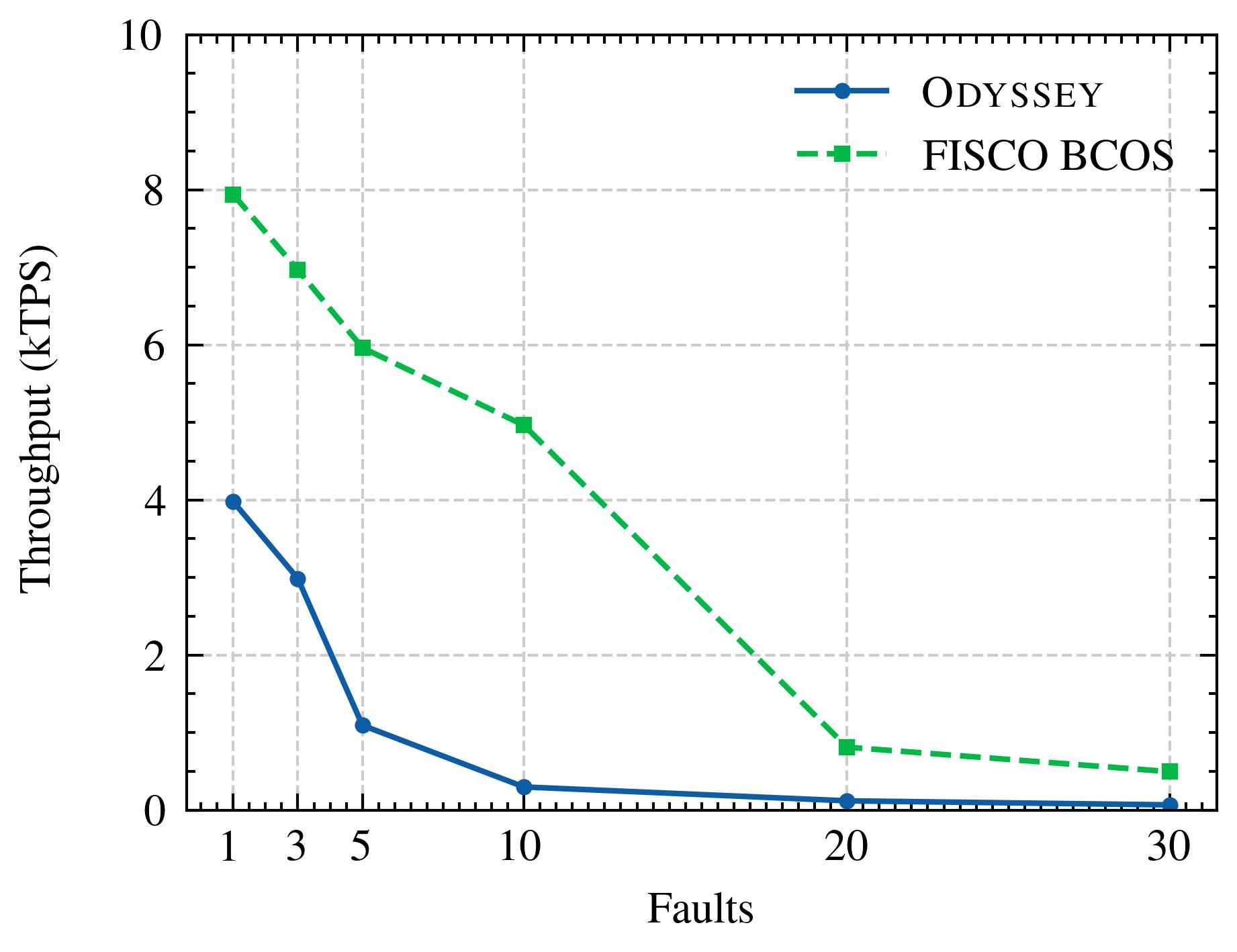}
        \label{wan_max_tps}
    \end{minipage}
}
\subfigure[Maximum latency vs. faults]
{
 	\begin{minipage}[c]{.23\textwidth}
        \centering
        \includegraphics[scale=0.55]{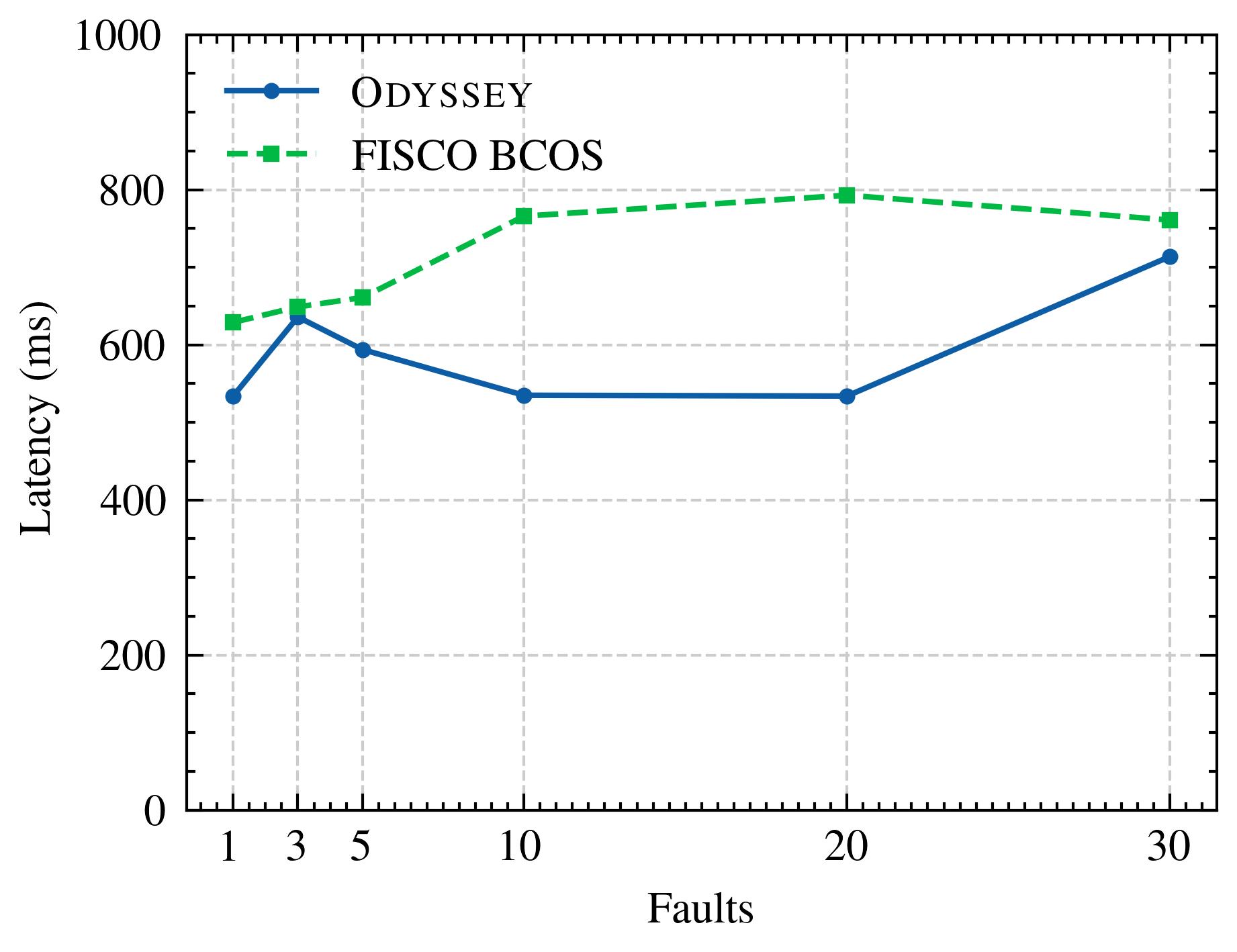}
        \label{wan_max_latency}
    \end{minipage}
}
\caption{Latency vs. throughput in WAN.}
\label{wan test}
\end{figure*}

\bheading{Latency vs. throughput.} 
We evaluate the throughput and latency of \sysname and BCOS in a WAN environment.
The results are shown in \figref{CB_wan}-\figref{wan_max_latency}.
\sysname and BCOS exhibit similar performance trends with those in the LAN environment as shown in \figref{CB_lan}-\figref{lan_max_tps}.
Specifically, \figref{CB_wan} and \figref{bcos_wan} show the latency of \sysname and BCOS as the throughput is increased up to system saturation. 
When $f=1$, the latency of \sysname is low and can be maintained below 500ms before reaching the peak throughput.
As $f$ increases, the latency can hardly go below 500ms.
Moreover, the throughput of \sysname can be up to 4k when $f=1, 2$, and the peak throughput decreases rapidly with the increase of $f$.
From \figref{bcos_wan}, it can be seen that the latency of BCOS is slightly fluctuating, and the peak throughput of BCOS is greater than 6k when $f=1, 3, 5, 10$ and decreases rapidly for larger values of $f$.
By comparing \figref{CB_wan} and \figref{bcos_wan}, we can see that the peak throughput of \sysname is lower than that of BCOS.
When the number of nodes is large ($f=20, 30$), the peak throughputs of both systems are maintained at a low level.
Moreover, the stability of the performance of BCOS is weaker than that of \sysname.

\bheading{Performance with varying number of faults.}  \figref{wan_max_tps} shows the same tendency as shown in \figref{lan_max_tps}. 
BCOS's throughput in the WAN degrades more significantly than that in the LAN. 
Furthermore, \sysname maintains a relatively stable throughput reduction rate. 
However, \sysname demonstrates lower latency than BCOS and shows greater stability than its latency performance in the LAN environment, as shown in \figref{wan_max_latency}.

\bheading{Performance under fault delegator.}
We study the impact on throughput and latency of a crash delegator, with $f=1$ up to the maximum tolerated number of $f=20$ Byzantine nodes ($n=3, 7, 11, 21, 41$ and $61$).
Considering the difference in saturation throughput for different numbers of nodes, we send 100 transactions per second to \sysname, for a total of 10,000 transactions. 
\figref{wan failure} shows the change in throughput and latency as the number of nodes increases. 
\figref{wan_max_tps_failure} shows that when \sysname has 61 nodes, the speed of processing transactions reaches saturation and system throughput drops by 13\%.
When a crash delegator exists, the trend of system throughput is the same as the normal case. 
Specifically, when $n<61$, system throughput is almost unaffected by the crash delegator and can remain stable.
When $n=61$, the presence of the crash delegator causes the system throughput to drop by 24\%, which is more than the normal case. 
\figref{wan_max_latency_failure} shows that when $n<61$, latency keeps stable, while when $n=61$, latency increases by 185\%. 
The crash delegator introduces a significant increase in latency when $n=21, 41$ and $61$, \eg, at $n=41$, latency increases 2.245 seconds compared to the normal case.
As the increase of $n$, the latency variation of \sysname under the impact of the crash delegator increases up to 169\%.

\begin{figure}[htbp]
\centering
\subfigure[Maximum TPS vs. nodes]{
\begin{minipage}[b]{0.45\columnwidth}
    \includegraphics[scale=0.55]{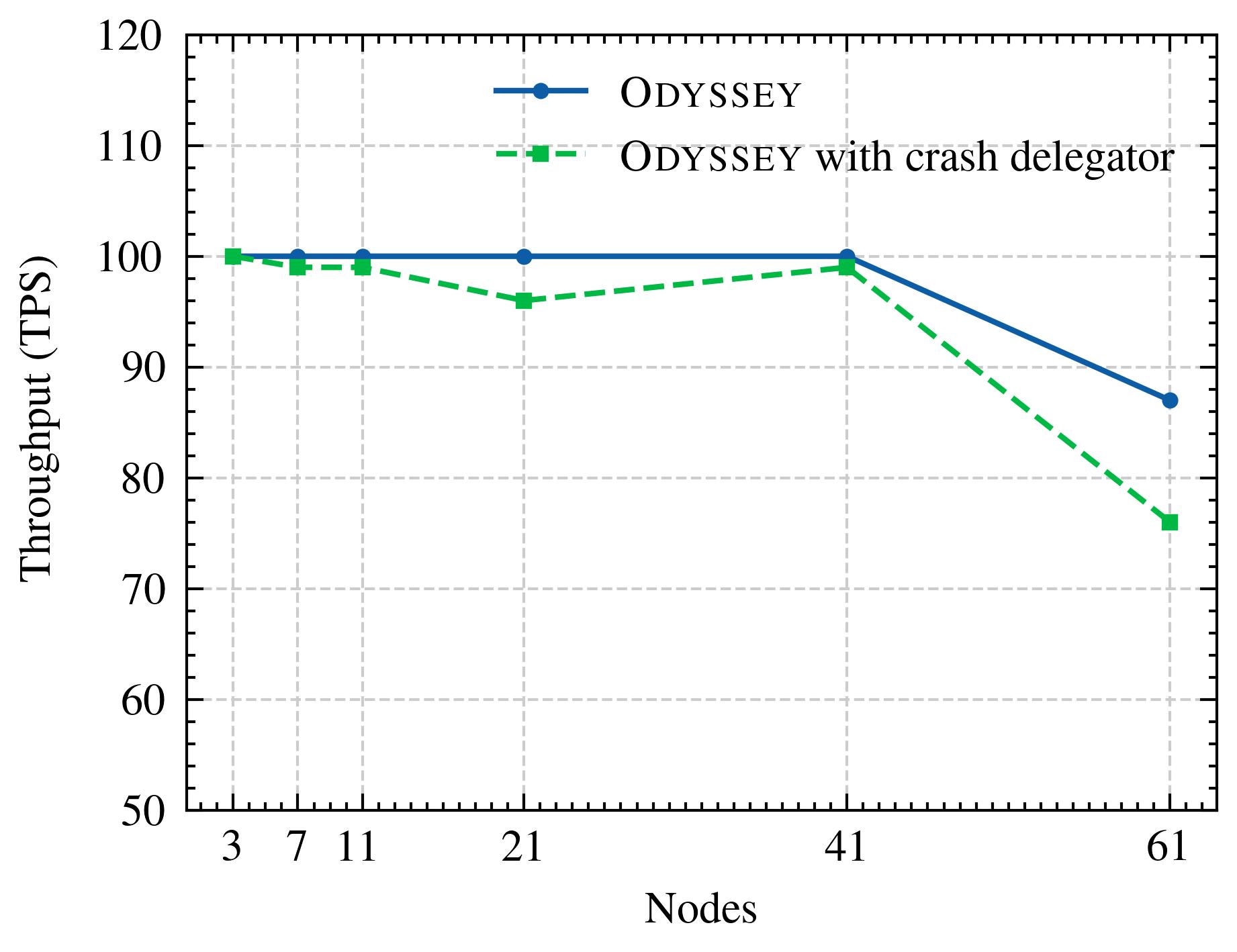}
    \label{wan_max_tps_failure}
\end{minipage}
}
\subfigure[Maximum latency vs. nodes]{
\begin{minipage}[b]{0.45\columnwidth}
    \includegraphics[scale=0.55]{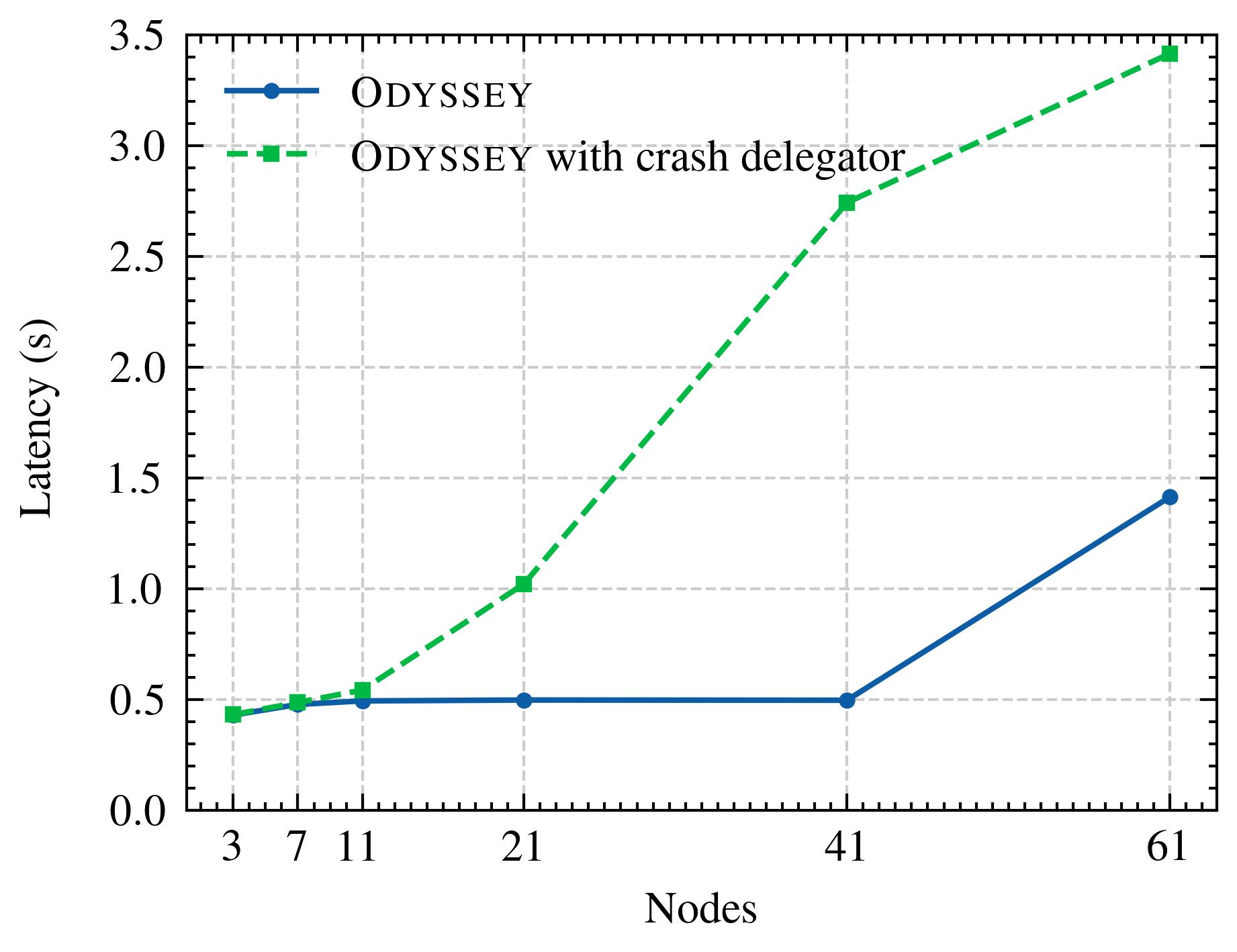}
    \label{wan_max_latency_failure}
\end{minipage}
}

\caption{Latency vs. throughput in WAN with crash delegator.}
\label{wan failure}
\end{figure}

\begin{table}[t]
\centering
\caption{Overhead profiling for \sysname.}
\label{example}
\begin{adjustbox}{width=1\columnwidth,center}
\begin{tabular}{ccc}
\toprule
Protocol & Throughput (TPS) & Latency (ms) \\
\midrule
\sysname & 4850 & 970  \\
\sysname-CS & 4966 & 958  \\
\sysname-ES & 4462 & 820  \\ 
BCOS-S & 5759 & 835  \\ 
BCOS & 8924 & 792  \\
\bottomrule
\end{tabular}
\end{adjustbox}
\label{profile}
\end{table}

\subsection{Overhead Profiling.} \label{subsec:overhead}
To better understand the overhead introduced by \sysname, we compare the performance of \sysname-CS, \sysname-ES, and BCOS-S. 
We construct a 3-node cluster for CFT-based systems (\sysname and \sysname-CS) and a 4-node cluster for BFT-based systems (BCOS and BCOS-S) in the WAN. Table~\ref{profile} lists the throughput and latency of \sysname and its variants. 

\begin{packeditemize}
    \item \textbf{Consensus overhead.} Through the comparison of \sysname and \sysname-ES in the WAN, the gap between \sysname and \sysname-ES is narrow. The peak throughput of \sysname is 1.09 times higher than that of \sysname-ES. It can be seen that the modified consensus layer brings a slight performance improvement for \sysname. 
    Although we employed a VR protocol with only two consensus phases, we added an additional round of communication at the end of the second phase---when the current primary node finalizes the block consensus---to notify other nodes of the consensus results.
    So, the total number of communication rounds is not reduced compared to the PBFT protocol in BCOS.

    \item \textbf{Execution overhead.} The performance difference between \sysname and \sysname-CS suggests that under conditions where delegated execution is used, the performance loss at the execution level is negligible.

    \item \textbf{SEV-SNP overhead.} The peak throughput of BCOS is 1.55 times higher than that of BCOS-S. The use of SEV-SNP confidential virtual machines brings about a large performance loss, and at this point, there is not much difference in the peak throughput between \sysname and BCOS-S. This shows that using an SEV-SNP confidential virtual machine is the main reason for the widening performance gap. 
    Due to the memory encryption operations of the confidential virtual machine and its I/O interactions with external storage devices, greater overhead is introduced.
\end{packeditemize}

\section{Discussion}

\bheading{Lessons learned.}
Compared with traditional methods, our approach has the following benefits. 
First, it does not require software patches, making it feasible for consortium blockchains due to the decentralization characteristic. 
Besides, new side-channel attacks against TEE architectures appear constantly, and so they will exist for a long time, and continuous patches are required.   
Second, it does not introduce significant overhead. 
By contrast, existing protection, such as Raccoon~\cite{rane2015raccoon}, Cloak~\cite{gruss2017strong}, leads to either excessive performance overhead or imperfect side-channel protection.

\bheading{Cloning-based forking attack.}
Wilde \etal~\cite{wilde2025forking} presents an exhaustive analysis of the forking attack that includes rollback-based forking attack, which is discussed in our paper, and cloning-based forking attack.
In cloning-based forking attacks, an adversary exploits the inability of TEEs to distinguish between multiple instances of the same enclave binary running on the same platform. Specifically, it can create multiple identical instances of an enclave on the same machine and choose the preferred results output from TEEs. 
To address this attack, CCF~\cite{howard2023ccf} adopts software-layer solutions. We leave such optimization for \sysname to future work.
\section{Related Work}
\label{sec:related}
In this section, we review prior work on confidential blockchains using TEEs and attacks of TEEs. 

\subsection{Confidential Blockchains}

\iheading{1) L1 solutions.} Brandenburger \etal~\cite{brandenburger2018blockchain} is among the first to port Hyperledger Fabric~\cite{androulaki2018hyperledger} into TEEs to build Confidential blockchain. 
The authors propose application-level barriers to mitigate TEEs' rollback attacks in the execute-then-order architecture.
However, the barriers require much effort to redesign existing blockchain applications (\eg, smart contracts), which is also error-prone. Later, Tz4fabric~\cite{muller2020tz4fabric} only runs the execution of Fabric into TEEs to minimize TCB.
Lew \etal~\cite{lew2024revisiting} revisit the rollback attack issues in Hyperledger Fabric Private Chaincode (FPC) and propose an associated verification mechanism for defense.

Except for Hyperledger Fabric, Microsoft~\cite{russinovich2019ccf, howard2023ccf} realizes an auditable immutable ledger within TEEs to build a general-purpose computing platform, \ie, Confidential Consortium Framework (CCF). CCF allows a crashed TEE node to rejoin the system through reconfiguration to mitigate rollback attacks. 
CONFIDE~\cite{yan2020confidentiality} is another confidential blockchain, which uses a key management protocol to encrypt and decrypt confidential transactions in TEEs.

Most L1 solutions are implemented as consortium blockchains since requiring every participant to possess TEEs is impractical in permissionless blockchains. 
Despite these efforts, studies~\cite{li2024sok, SGXonerated} show that these Confidential blockchains are susceptible to various attacks that compromise confidentiality (\secref{Consortium Blockchains}). This motivates us to propose \sysname, the first Confidential blockchain that comprehensively addresses these attacks. 

\iheading{2) L2 solutions.} L2 solutions~\cite{cheng2019ekiden, tran2018obscuro, lind2019teechain, das2019fastkitten, frassetto2022pose, TeeRollup} mainly offload transaction execution from blockchains (\ie, L1) to off-chain parties' TEEs.  
For example, Ekiden~\cite{cheng2019ekiden}, FastKitten~\cite{das2019fastkitten}, and POSE~\cite{frassetto2022pose} enables off-chain parties to execute smart contracts within their TEEs. 
Specifically, clients can send encrypted transactions to L2 nodes equipped with TEEs where transactions are decrypted and executed.
After that, the results are encrypted and uploaded to L1.
Unlike them, Teechain~\cite{lind2019teechain} enables off-chain nodes to build confidential payment channels to process transactions. 

L2 solutions can be implemented atop both permissionless and permissioned blockchains. Similarly, studies~\cite{li2024sok, SGXonerated} show that these L2 solutions also face confidentiality challenges. While this paper primarily focuses on consortium blockchains, the proposed defenses may also be extended to address these challenges in L2 solutions due to the same confidentiality attacks.

\subsection{Defense against Relevant Attacks}

\bheading{Defenses against rollback attacks.}
To counter rollback attacks against Intel SGX,
Rote~\cite{matetic2017rote} introduces an approach that leverages a distributed system to store enclave-specific counters.
Engraft~\cite{wang2022engraft, tiks} proposes TIKS that uses a KVstore for storing the Raft meta file and the log meta file.
Narrator~\cite{niu2022narrator, narrator-pro} provides state continuity protection for applications running in cloud TEEs. It initiates a distributed system of TEEs by blockchain that can realize fast and unlimited state updates and reads. 
Nimble~\cite{angel2023nimble} introduces an append-only ledger service to guarantee linearizability that can capture liveness with high performance and a small TCB. 
\sysname also addresses the rollback issue against TEEs. 
Unlike the aforementioned methods, \sysname leverages a finalized consensus protocol to fix the transaction execution order, effectively eliminating the possibility of adversaries replaying transactions.

\bheading{Side-channel defenses.}
Side-channel attacks are long-standing threats to TEEs~\cite{li2024sok} and there is no one-size-fits-all solution to eliminate them. Besides, existing defenses, such as Raccoon~\cite{rane2015raccoon}, Cloak~\cite{gruss2017strong}, have excessive performance overhead. 
By contrast, \sysname leverages delegation execution to protect transaction confidentiality in consortium blockchains. 
The design and evaluation showcases that delegation execution is an effective alternative. Extending this execution model to a broader range of side-channel attacks is an interesting direction for future work. 

\section{Conclusion}
This paper identifies \eiatt and \eratt attacks that can breach the confidentiality of existing consortium blockchains. 
To address these attacks, this paper presents \sysname, a Confidential blockchain that leverages the delegated execution scheme and order-then-execute architecture. The scheme allows clients to delegate their transactions to their trustees' TEEs for execution and then synchronize execution results to distrusting participants.
\sysname further leverages two optimizations, \ie, location-aware concurrent execution and delegation failure handler, to reduce delegation overhead.  The extensive evaluation demonstrates that \sysname introduces only slight overhead for the adopted defenses.  

\normalem
\bibliographystyle{IEEEtran}
\bibliography{reference}

\end{document}